\newtheorem{theorem}{Theorem}
\newtheorem{problem}{Problem}
\newtheorem{proposition}{Proposition}
\newtheorem{lemma}{Lemma}
\newtheorem{proof}{Proof}
\newenvironment{eq}{\everymath {\displaystyle \everymath{ }} \equation}{ \endequation} %
\providecommand{\eval}[2]{\left.#1\right\rvert_{#2}}
\newcommand{\trace}{\textnormal{Trace}}
\DeclareMathOperator*{\argmin}{argmin}
\title{\LARGE \bf
Realization independent single time-delay dynamical model interpolation and $\mathcal{H}_2$-optimal approximation}
\author{I. Pontes Duff, C. Poussot-Vassal and C. Seren
\thanks{I. Pontes Duff is with ISAE, Onera - The French Aerospace Lab, F-31055 Toulouse, France; email:
        {\tt\small ipontes@onera.fr}}%
\thanks{C. Poussot-Vassal and C. Seren are Onera - The French Aerospace Lab, F-31055 Toulouse, France; emails:
        {\tt\small charles.poussot-vassal@onera.fr} and {\tt\small cedric.seren@onera.fr}} %
     }
\begin{document}

\maketitle
\thispagestyle{empty}
\pagestyle{empty}

\begin{abstract}
In this paper, the realization-free model approximation problem, as stated in \cite{mayo2007framework,beattie2012realization}, is revisited in the case where the interpolating model might be time-delay dependent. To this aim, the Loewner framework, initially settled for delay-free realization, is firstly generalized to the  single delay case. Secondly,  the (infinite) model approximation $\mathcal{H}_2$ optimality conditions are established through the use of the Lambert functions. Finally, a numerically effective iterative scheme, named \textbf{dTF-IRKA}, similar to the \textbf{TF-IRKA} \cite{beattie2012realization}, is proposed to reach a part of the aforementioned optimality conditions. The proposed method validity and interest are assessed on different numerical examples. 
\end{abstract}

\section{Introduction }

\subsection{Motivating context and problem formulation}

Mathematical dynamical models are usually necessary to understand, analyse and control the behaviour of physical phenomena. Usually, high fidelity models require numerous equations and variables. The resulting associated state-space realization is consequently of large dimensions and the resulting system is said to be a large-scale one. In addition, in some cases, a linear finite dimension realization is not accessible or even does not exist (\emph{e.g.}, partial differential equations models, irrational transfer functions, etc.). Although such models can faithfully and accurately reproduce reality, it might lead to \emph{(i)} high complexity and/or \emph{(ii)} realization-less models for which classical methods cannot be reasonably applied due to high numerical burden, low computational speed and inappropriate tools. In these cases and for control concerns, an approximation by a less complex realization is therefore desirable. 
This justifies the use of model approximation and interpolation techniques allowing to find a simpler model which faithfully approaches the original one and that can be used in place for simulation, analysis and control (for survey and historical references, see \cite{Luenberger1967, wilson1970optimum, antoulas2005approximation} and references therein).

Moreover, as Time-Delay Systems (TDS) is a large class of dynamical systems which generalizes the finite dimension realization one, \emph{approximating any transfer functions}\footnote{Throughout this paper, we denote $\mathcal{H}_2^{(n_y\times n_u)}$ or simply $\mathcal{H}_2$, the open subspace of $\mathcal{L}_2$ with matrix-valued function $H(s)$ ($n_y$ outputs, $n_u$ inputs), $\forall s \in \mathbb{C}$, which are analytic in $\textbf{Re}(s)> 0$ (functions that are locally given by a convergent power series and differentiable on each point of its definition set) \cite{PontesECC:2015}.} $H(s) \in\mathcal{H}_2^{(n_y\times n_u)}$ or complex data sets\footnote{We denote as $\big(s_i,H(s_i)\big)$ the evaluation of transfer $H$ at $s_i$.} $\big(s_i,H(s_i)\big)$ (for $i=1,\dots,r$, $r\in \mathbb{N}^*$), \emph{by a time-delay dynamical model} might be relevant for some specific applications where the delay naturally appears. Indeed for such kind of systems many dedicated and powerful results have been obtained for stability, performance analysis and control (see \emph{e.g.}, \cite{richard2003time, Niculescu:01,Briat:14e}). 
 
In this paper, the approximation of any realization or realization-free linear dynamical model by a \emph{single time-delay model of finite dimension} is developed. More specifically, we are interested in approximating any MIMO transfer function $H(s) \in \mathcal{H}_2$ by a single delay finite-dimensional linear time-invariant descriptor system denoted  $\mathbf{H_d} = (E,A,B,C,\tau)$ and defined by:
\begin{equation}\label{eq:SSDescriptordelay}
E\dot{x}(t) = Ax(t-\tau) +Bu(t),~y(t) = Cx(t),
\end{equation}    
whose transfer function  is $H_d(s) = C(sE-Ae^{-\tau s})^{-1}B$. It is straightforward to note that the approximation form \eqref{eq:SSDescriptordelay} generalizes the delay-free one used in \cite{mayo2007framework,beattie2012realization} given as $\mathbf{H} = (E,A,B,C,0)$ (or simply $(E,A,B,C)$), 
\begin{equation}\label{eq:SSDescriptor}
E\dot{x}(t) = Ax(t) +Bu(t),~y(t) = Cx(t).
\end{equation} 

Following \cite{mayo2007framework, beattie2012realization}, and inspired by the widely used $\mathcal{H}_2$-approximation problem \cite{benner2005dimension,dooren2007,gugercin2008h_2}, our objective can be mathematically formulated as follows:  
\begin{problem}\label{pb:General} Given a LTI system $H(s) \in \mathcal{H}_2$ (or $\big(s_i,H(s_i)\big)$, the evaluation of $H(s)$ at $s_i\in\mathbb{C}$, for $i=1,\dots,r$), a positive integer $r \in \mathbb{N}^*$ and a positive scalar $\tau \in \mathbb{R}$, find a model $\mathbf{\hat{H}_d} = (E,A,B,C,\tau)\in \mathcal{H}_{2}$  such that
\begin{equation}
\mathbf{\hat{H}_d} := \argmin_{\mathbf{G_d}\in \mathcal{H}_2 ,\dim(\mathbf{G_d} ) \leq r} \|\mathbf{H}-\mathbf{G_d}\|_{\mathcal{H}_2}.
\end{equation}
\end{problem} 

In other words, if an evaluation of the transfer function $H(s)$, for any $s\in\mathbb{C}$, is available (either from data or by simply evaluating $H(s)$), our goal is to find a delay model of the form  \eqref{eq:SSDescriptordelay}, that well approximates $H$, in the sense of the $\mathcal{H}_2$-norm.


\subsection{Contributions}

The purpose of this paper is thus to extend the application domain of the Loewner framework established in \cite{mayo2007framework,IonitaPhd2013} to dynamical systems with one single internal delay. To do so, a new \emph{delay Loewner framework} is firstly developed to interpolate a given transfer function by a time-delay model of the form \eqref{eq:SSDescriptordelay}, enabling the delayed Loewner framework to be applied to any models for which the transfer function is  accessible only. This allows then model approximation for both infinite/finite dimensional systems and data-based ones. Then, following Problem \ref{pb:General}, the \emph{$\mathcal{H}_2$-oriented optimality conditions} are formulated and used to construct an iterative algorithm, similar to the recently proposed \textbf{TF-IRKA} \cite{beattie2012realization}, allowing to obtain an approximated model $\mathbf{\hat{H}_d}$ satisfying \emph{a finite number of the $\mathcal{H}_2$ optimality conditions}. 

\subsection{Notations and outlines}

We denote by $\mathbb{N}^*$ the set of natural numbers without 0, by  $\mathcal{H}_2$ the Hilbert space of matrix-valued functions $F: \mathbb{C} \rightarrow \mathbb{C}^{n_y\times n_u}$ satisfying $\int_{\mathbb{R}}\trace[\overline{F(i\omega)}F(i\omega)^T]d\omega <\infty$ whose components $f_{i,j}$ are analytic in the open right half plane. For $\mathbf{H},\mathbf{G}\in \mathcal{H}_2(i\mathbb{R})$, we define the inner-product
\[\langle\textbf{H} ,\textbf{G} \rangle_{\mathcal{H}_2} = \int_{-\infty}^{\infty}\textnormal{trace}\Big(\overline{H(i\omega)}G(i\omega)^T\Big)d\omega, \]
with corresponding induced-norm $\|\mathbf{H}\|_{\mathcal{H}_2} = \langle \mathbf{H},\mathbf{H}\rangle_{\mathcal{H}_2}^{\frac{1}{2}}$. Let finally denote by $F'(\lambda) = \eval{dF/ds}{s=\lambda}$.

The paper is organized as follows:  Section \ref{sec:reviewinterpolation} recalls some preliminary results on the rational interpolation Loewner framework proposed in \cite{mayo2007framework}.  Section \ref{sec:delayinterpolation}  presents the extension of these results to the  single-delay case. Section \ref{sec:Optimality} derives the first order optimality conditions from the $\mathcal{H}_2$-optimisation Problem \ref{pb:General}. Then, Section \ref{sec:delayTFIRKA} details an iterative algorithm celebrated as \textbf{dTF-IRKA}\footnote{\textbf{dTF-IRKA} stands for delay Transfer Function Iterative Rational Krylov Algorithm.} (inspired by the \textbf{TF-IRKA} from \cite{beattie2012realization}), which allows to obtain an approximation satisfying some optimality conditions in a numerically efficient and memory affordable way.  Finally, Section \ref{sec:applications} illustrates the proposed approach and framework on numerical examples.
\section{Realization-less interpolation}\label{sec:reviewinterpolation}

\subsection{ Preliminary results in Loewner framework for rational interpolation} 
The interpolation problem, in its basic general form, is stated as follows:
\begin{problem}[General interpolation problem \cite{mayo2007framework}] \label{pb:generalinterp} Given \emph{right}:
\begin{equation}\label{eq:rightinterdata}
\{(\lambda_i,\mathbf{r}_i,\mathbf{w}_i) | \lambda_i \in \mathbb{C}, \mathbf{r}_i \in \mathbb{C}^{n_u \times 1} , \mathbf{w}_i \in \mathbb{C}^{n_y \times 1}, i =1,\dots, \rho  \}
\end{equation} 
and \emph{left interpolation data}:
\begin{equation}\label{eq:leftinterdata}
\{(\mu_j,\mathbf{l}_j,\mathbf{v}_j) | \mu_j \in \mathbb{C}, \mathbf{l}_j \in \mathbb{C}^{1 \times n_u} , \mathbf{v}_j \in \mathbb{C}^{1 \times n_y}, j =1,\dots, \nu \}
\end{equation}
construct a realization $\mathbf{H} = (E,A,B,C)$ of appropriate dimensions whose transfer function $H(s) = C(sE-A)^{-1}B$ both satisfies the \emph{right}:
\begin{equation}
H(\lambda_i)\mathbf{r}_i = \mathbf{w}_i,~ i = 1,\dots \rho
\label{eq:constRight}
\end{equation} 
and the \emph{left constraints}:
\begin{equation}
\mathbf{l}_jH(\mu_j) = \mathbf{v}_j,~ j = 1,\dots \nu.
\label{eq:constLeft}
\end{equation}
\end{problem}

The above problem can be solved (to obtain real-valued matrices) thanks to the following theorem, proposed by \cite{mayo2007framework}.

\begin{theorem}[Loewner framework \cite{mayo2007framework}]\label{thm:Loewner}
Given \emph{right} and \emph{left interpolation data} as in \eqref{eq:rightinterdata}-\eqref{eq:leftinterdata}, and assuming that $\rho = \nu = r$ , the realization $\mathbf{H} = (E,A,B,C)$ of order $r$ constructed as
\begin{equation}
E= -\mathbb{L}, A= -\mathbb{L}_{\sigma}, B=V, C=W,
\end{equation}
interpolates the right and left constraints \eqref{eq:constRight}-\eqref{eq:constLeft}, if
\begin{equation}
\begin{array}{rcl}
[\mathbb{L}]_{ij} \hspace{-0.3cm} &=& \dfrac{\mathbf{v}_i\mathbf{r}_j - \mathbf{l}_i\mathbf{w}_j}{\mu_i - \lambda_j} = \dfrac{\mathbf{l}_i\big( H(\lambda_i) - H(\mu_j) \big) \mathbf{r}_j}{\mu_i - \lambda_j} \\
\hspace{-0.9cm}\,[\mathbb{L}_{\sigma}]_{ij}\hspace{-0.3cm} &=& \dfrac{\mu_i\mathbf{v}_i\mathbf{r}_j - \mathbf{l}_i\mathbf{w}_j\lambda_j}{\mu_i - \lambda_j} =  \dfrac{\mu_i  \mathbf{l}_i\big( H(\lambda_i) - H(\mu_j) \big) \mathbf{r}_j  \lambda_j}{\mu_i - \lambda_j}
\end{array}
\end{equation}
known as the Loewner and the shifted Loewner matrices, respectively, and \[W =[\textbf{w}_1, \dots, \textbf{w}_r] ~,~ V^T =  [\textbf{v}_1, \dots,\textbf{v}_r]. \]
\end{theorem}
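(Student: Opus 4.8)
The plan is to reduce the two interpolation conditions to a pair of purely algebraic Sylvester-type identities satisfied by the Loewner and shifted Loewner matrices, and then read off the constraints column-wise and row-wise.

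First I would introduce the diagonal matrices $\Lambda = \mathrm{diag}(\lambda_1,\dots,\lambda_r)$ and $M = \mathrm{diag}(\mu_1,\dots,\mu_r)$, the right-direction matrix $R = [\mathbf{r}_1,\dots,\mathbf{r}_r]$, and the left-direction matrix $L$ whose $j$-th row is $\mathbf{l}_j$, alongside $W = [\mathbf{w}_1,\dots,\mathbf{w}_r]$ and $V$ with $V^T = [\mathbf{v}_1,\dots,\mathbf{v}_r]$ as in the statement. Working directly from the componentwise definitions of $[\mathbb{L}]_{ij}$ and $[\mathbb{L}_\sigma]_{ij}$, one combines the two scalar fractions: since $\mu_i\mathbf{v}_i\mathbf{r}_j-\mathbf{l}_i\mathbf{w}_j\lambda_j - \lambda_j(\mathbf{v}_i\mathbf{r}_j-\mathbf{l}_i\mathbf{w}_j) = (\mu_i-\lambda_j)\mathbf{v}_i\mathbf{r}_j$ and $\mu_i\mathbf{v}_i\mathbf{r}_j-\mathbf{l}_i\mathbf{w}_j\lambda_j - \mu_i(\mathbf{v}_i\mathbf{r}_j-\mathbf{l}_i\mathbf{w}_j) = (\mu_i-\lambda_j)\mathbf{l}_i\mathbf{w}_j$, cancelling $\mu_i-\lambda_j$ yields the identities
\[
\mathbb{L}_\sigma - \mathbb{L}\Lambda = VR, \qquad \mathbb{L}_\sigma - M\mathbb{L} = LW ,
\]
equivalently the single Sylvester equation $M\mathbb{L} - \mathbb{L}\Lambda = VR - LW$. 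These two lines are the only real computation in the proof.

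Next I would exploit the diagonal structure. With $E=-\mathbb{L}$, $A=-\mathbb{L}_\sigma$, $B=V$, $C=W$ one has $sE-A = \mathbb{L}_\sigma - s\mathbb{L}$, so the candidate transfer function is $\widehat{H}(s) = W(\mathbb{L}_\sigma - s\mathbb{L})^{-1}V$. Multiplying the first identity on the right by the $k$-th canonical vector $e_k$ and using $\Lambda e_k = \lambda_k e_k$ gives $(\mathbb{L}_\sigma - \lambda_k\mathbb{L})e_k = V\mathbf{r}_k$, hence $e_k = (\mathbb{L}_\sigma - \lambda_k\mathbb{L})^{-1}V\mathbf{r}_k$ and therefore $\widehat{H}(\lambda_k)\mathbf{r}_k = We_k = \mathbf{w}_k$, which is the right constraint \eqref{eq:constRight}. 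Symmetrically, multiplying the second identity on the left by $e_k^T$ and using $e_k^T M = \mu_k e_k^T$ gives $e_k^T(\mathbb{L}_\sigma - \mu_k\mathbb{L}) = \mathbf{l}_k W$, hence $\mathbf{l}_k\widehat{H}(\mu_k) = e_k^T V = \mathbf{v}_k$, which is the left constraint \eqref{eq:constLeft}. The equivalence of the two displayed forms for $[\mathbb{L}]_{ij}$ and $[\mathbb{L}_\sigma]_{ij}$ is then a one-line check, substituting the data relations $\mathbf{w}_j = H(\lambda_j)\mathbf{r}_j$ and $\mathbf{v}_i = \mathbf{l}_i H(\mu_i)$.

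The main obstacle is not the algebra but the well-posedness it silently invokes: the inversions above require $\mathbb{L}_\sigma - \lambda_k\mathbb{L}$ and $\mathbb{L}_\sigma - \mu_k\mathbb{L}$ to be nonsingular, i.e. the matrix pencil $(\mathbb{L}_\sigma,\mathbb{L})$ must be regular and none of the interpolation points $\lambda_k,\mu_k$ may coincide with one of its generalized eigenvalues. I would therefore state this as a standing assumption on the data (it holds generically once the $\lambda_i$ and $\mu_j$ are pairwise distinct and sufficiently rich), or invoke the rank characterisation of $\mathbb{L}_\sigma - s\mathbb{L}$ from \cite{mayo2007framework}, which is precisely where that reference locates its hypotheses; the single-delay generalisation developed in Section~\ref{sec:delayinterpolation} will have to revisit this point for the delayed pencil $\mathbb{L}_\sigma - s e^{-\tau s}\mathbb{L}$.
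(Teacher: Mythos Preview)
The paper does not supply its own proof of Theorem~\ref{thm:Loewner}; it is quoted verbatim as a preliminary result from \cite{mayo2007framework}. Your argument is correct and is in fact the standard Mayo--Antoulas derivation: the two Sylvester identities $\mathbb{L}_\sigma-\mathbb{L}\Lambda=VR$ and $\mathbb{L}_\sigma-M\mathbb{L}=LW$ are precisely the relations used in that reference, and reading them off column- and row-wise against the canonical basis vectors is exactly how the tangential interpolation conditions are recovered there. Your explicit flagging of the regularity hypothesis on the pencil $(\mathbb{L}_\sigma,\mathbb{L})$ is also the right caveat and matches the rank condition stated in \cite{mayo2007framework}.

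One very minor remark on the closing sentence: in the delay realization $(E,A,B,C,\tau)$ of Section~\ref{sec:delayinterpolation} one has $sE-Ae^{-\tau s}=-s\mathbb{L}+\mathbb{L}_\sigma e^{-\tau s}$, so the relevant delayed pencil is $\mathbb{L}_\sigma e^{-\tau s}-s\mathbb{L}$ rather than $\mathbb{L}_\sigma-se^{-\tau s}\mathbb{L}$. This does not affect the proof of Theorem~\ref{thm:Loewner} itself.
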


Theorem \ref{thm:Loewner}  allows to obtain a model $\mathbf{H} = (E,A,B,C)$ whose transfer function interpolates right and left constraints as stated in Problem \ref{pb:generalinterp}. This has been extensively used for system identification from complex data obtained by a signal generator and for large-scale model approximation purposes \cite{IonitaPhd2013},\cite{ionita2014data}. An extension of Problem \ref{pb:generalinterp}, including some derivative constraints, has also been considered to solve the $\mathcal H_2$ model approximation problem \cite{beattie2012realization}. To this aim, the following theorem, initially stated in \cite{mayo2007framework}, provides a solution for the problem with derivatives constraints in the case where the right and left interpolation points are equals, \emph{i.e.}, $s_i = \mu_i = \lambda_i~,~ \forall i = 1,\dots r$. 

\begin{theorem}[Derivative Loewner framework \cite{mayo2007framework}]\label{thm:DerivativeLoewner}
Given a system represented by its transfer function $H(s)$,  $r$ shift points $\{s_1,\dots,s_r\} \in \mathbb{C}$ and $r$ tangential directions $\{\mathbf{l}_1,\dots, \mathbf{l}_r\} \in \mathbb{C}^{1\times n_y}$, $\{\mathbf{r}_1,\dots,\mathbf{r}_r\}\in \mathbb{C}^{n_u \times 1}$, the $r$-dimensional descriptor model $\mathbf{\hat{H}} = (\hat{E},\hat{A},\hat{B},\hat{C})$, as in \eqref{eq:SSDescriptor}, interpolates $H(s)$ as follows, for $k=1,\dots r$:
\begin{equation}
 H({s}_k)\mathbf{r}_k = \hat{H}({s}_k)\mathbf{r}_k,\,\,\mathbf{l}_kH(s_k) = \mathbf{l}_k\hat{H}({s}_k)\,
\label{pbh2}
\end{equation}
\begin{equation}
\mathbf{l}_kH'(s_k)\mathbf{r}_k = \mathbf{l}_k\hat{H}'(s_k)\mathbf{r}_k,
\label{pbh21}
\end{equation}
 if for $i,j=1,\dots,r$:
\begin{eq}
(\hat{E})_{ij} =
\left\{\begin{array}{lr}
-\frac{\mathbf{l}_i \big(H(s_i)-H(s_j)\big)\mathbf{r}_j}{s_i-s_j} & i\neq j\\
-\mathbf{l}_i H'(s_i)\mathbf{r}_i & i=j
\end{array}\right.\nonumber
\end{eq}
\begin{eq}
(\hat{A})_{ij} = \left\{\begin{array}{lr}
-\frac{\mathbf{l}_i \big(s_iH(s_i)-s_jH(s_j)\big)\mathbf{r}_j}{s_i-s_j} & i\neq j\\
-\mathbf{l}_i \big(sH(s)\big)'|_{s=s_i}\mathbf{r}_i& i=j
\end{array}\right. \nonumber
\end{eq}
\begin{equation}
\hat{C} = [H(s_1)\mathbf{r}_1,\dots ,H(s_r)\mathbf{r}_r] \hspace{0.2cm} \textnormal{and} \,\,\hat{B} = \left[\begin{array}{ccc}
\mathbf{l}_1H(s_1) \\
\vdots\\
\mathbf{l}_rH(s_r) \nonumber
\end{array}\right].
\end{equation}
\end{theorem}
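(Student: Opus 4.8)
The plan is a direct verification that reduces everything to two algebraic identities satisfied by the constructed pencil. Fix $k\in\{1,\dots,r\}$, let $e_k\in\mathbb{C}^{r}$ be the $k$-th canonical vector, and assume (as the statement implicitly requires) that $s_k\hat E-\hat A$ is invertible for each $k$, so that $\hat H(s)=\hat C(s\hat E-\hat A)^{-1}\hat B$ is analytic at $s_k$. First I would establish the two identities
\[
(s_k\hat E-\hat A)\,e_k=\hat B\mathbf{r}_k
\qquad\text{and}\qquad
e_k^{T}(s_k\hat E-\hat A)=\mathbf{l}_k\hat C .
\]
For the first one I would compute the $i$-th component $s_k(\hat E)_{ik}-(\hat A)_{ik}$: for $i\neq k$ the two difference quotients share the denominator $s_i-s_k$, and after clearing it the numerator simplifies to $(s_i-s_k)\,\mathbf{l}_iH(s_i)\mathbf{r}_k$, leaving $\mathbf{l}_iH(s_i)\mathbf{r}_k$; for $i=k$ the diagonal entries together with the product rule $(sH(s))'=H(s)+sH'(s)$ give $-s_k\mathbf{l}_kH'(s_k)\mathbf{r}_k+\mathbf{l}_k\big(H(s_k)+s_kH'(s_k)\big)\mathbf{r}_k=\mathbf{l}_kH(s_k)\mathbf{r}_k$, which is precisely the $i=k$ instance of $\mathbf{l}_iH(s_i)\mathbf{r}_k$. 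Comparing with the $\hat B$ of the statement, this column is exactly $\hat B\mathbf{r}_k$. The second identity follows symmetrically by inspecting $s_k(\hat E)_{kj}-(\hat A)_{kj}$, which reduces to $\mathbf{l}_kH(s_j)\mathbf{r}_j$ for every $j$, i.e. to the $j$-th entry of $\mathbf{l}_k\hat C$.

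Granting these identities, the point constraints \eqref{pbh2} are immediate: $(s_k\hat E-\hat A)^{-1}\hat B\mathbf{r}_k=e_k$ forces $\hat H(s_k)\mathbf{r}_k=\hat Ce_k=H(s_k)\mathbf{r}_k$, and $\mathbf{l}_k\hat C(s_k\hat E-\hat A)^{-1}=e_k^{T}$ forces $\mathbf{l}_k\hat H(s_k)=e_k^{T}\hat B=\mathbf{l}_kH(s_k)$. For the bitangential Hermite condition \eqref{pbh21} I would differentiate the resolvent, $\hat H'(s)=-\hat C(s\hat E-\hat A)^{-1}\hat E(s\hat E-\hat A)^{-1}\hat B$, evaluate at $s_k$, and substitute the two identities; the expression collapses to $-e_k^{T}\hat Ee_k=-(\hat E)_{kk}=\mathbf{l}_kH'(s_k)\mathbf{r}_k$, which is exactly $\mathbf{l}_k\hat H'(s_k)\mathbf{r}_k$.

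The step requiring the most care is the entrywise algebra: keeping the tangential vectors $\mathbf{l}_i,\mathbf{r}_j$ correctly placed through the cancellations, and above all treating the diagonal entries of $\hat E$ and $\hat A$ as the confluent limits $\lambda_j\to s_j$, $\mu_i\to s_i$ of the off-diagonal difference-quotient formulas, so that the same simplification applies there. This observation also suggests an alternative derivation straight from Theorem \ref{thm:Loewner}: apply the non-confluent construction with $r$ left points $\mu_i$ and $r$ right points $\lambda_j$, let $\mu_i,\lambda_i\to s_i$, note that every entry of $\mathbb{L},\mathbb{L}_{\sigma},V,W$ converges to the corresponding quantity in Theorem \ref{thm:DerivativeLoewner} and that \eqref{pbh2} passes to the limit, and recover \eqref{pbh21} by letting $\mu_i,\lambda_i\to s_i$ in $\mathbf{l}_i(H-\hat H)(\mu_i)\mathbf{r}_i=0=\mathbf{l}_i(H-\hat H)(\lambda_i)\mathbf{r}_i$; that route would additionally require the mild regularity ensuring that the limiting realization and its transfer-function derivative are well defined.
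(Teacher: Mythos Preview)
Your argument is correct. The two ``Sylvester--type'' identities
\[
(s_k\hat E-\hat A)e_k=\hat B\mathbf{r}_k,\qquad e_k^{T}(s_k\hat E-\hat A)=\mathbf{l}_k\hat C
\]
follow from the entrywise computation exactly as you outline, and from them the tangential conditions \eqref{pbh2} and the bitangential Hermite condition \eqref{pbh21} drop out immediately via $\hat H'(s)=-\hat C(s\hat E-\hat A)^{-1}\hat E(s\hat E-\hat A)^{-1}\hat B$ and $-(\hat E)_{kk}=\mathbf{l}_kH'(s_k)\mathbf{r}_k$. Your remark that invertibility of $s_k\hat E-\hat A$ is an implicit standing assumption is well placed.

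Note, however, that the paper does \emph{not} supply its own proof of Theorem~\ref{thm:DerivativeLoewner}: the result is quoted from \cite{mayo2007framework} as part of the preliminary material in Section~\ref{sec:reviewinterpolation}. There is therefore nothing in the paper to compare your proof against directly. That said, the route you take---verifying the column/row identities for the Loewner pencil and reading off the interpolation conditions---is precisely the standard mechanism used in the Loewner literature (and in \cite{mayo2007framework} in particular), so your proof is in line with the original source. Your alternative ``confluent limit'' derivation from Theorem~\ref{thm:Loewner} is also standard and equally valid, with the caveat you already flag about regularity of the limiting pencil.
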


In the following section the extensions of both Theorems \ref{thm:Loewner} and \ref{thm:DerivativeLoewner} are presented in the case where a time-delay realization $(E,A,B,C,\tau)$ as in \eqref{eq:SSDescriptordelay} is looked for.


\section{Delay Loewner framework}\label{sec:delayinterpolation}

Before introducing the main result of this section, let us consider the following representation of system \eqref{eq:SSDescriptordelay}, which will be useful along the rest of the paper. 

\begin{lemma}\label{lemma:DSDrep} Given $\mathbf{H_d} =(E,A,B,C,\tau)$, its transfer function $H_d(s)$ can be decomposed as:
\begin{equation}
H_d(s) = G\big(f(s)\big)e^{s\tau}
\end{equation}
where $G(s)$ is the transfer function of the delay-free model $\textbf G=(E,A,B,C)$ as in \eqref{eq:SSDescriptor} and $f(s) = se^{s\tau}$. 
\end{lemma}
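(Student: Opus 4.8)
The plan is to compute the transfer function of $\mathbf{H_d}=(E,A,B,C,\tau)$ directly from its definition and recognize the resulting expression as a composition of the delay-free transfer function $G$ with the map $f(s)=se^{s\tau}$, up to the scalar factor $e^{s\tau}$. Starting from $H_d(s) = C(sE - Ae^{-\tau s})^{-1}B$, the first step is to factor $e^{-\tau s}$ out of the bracketed matrix: writing $sE - Ae^{-\tau s} = e^{-\tau s}\big(se^{\tau s}E - A\big)$, so that $(sE-Ae^{-\tau s})^{-1} = e^{\tau s}\big(se^{\tau s}E - A\big)^{-1}$.

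The second step is to identify $se^{\tau s} = f(s)$ and substitute, which gives
\begin{equation}
H_d(s) = C\big(f(s)E - A\big)^{-1}B\, e^{\tau s}.\nonumber
\end{equation}
Since $G(s) = C(sE-A)^{-1}B$ is by definition the transfer function of the delay-free descriptor model $\mathbf{G}=(E,A,B,C)$, the matrix $C(f(s)E-A)^{-1}B$ is exactly $G$ evaluated at the point $f(s)$, i.e. $G\big(f(s)\big)$. Hence $H_d(s) = G\big(f(s)\big)e^{s\tau}$, which is the claimed decomposition.

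The computation itself is routine; the only points that require a word of care are the scalar factorization $Ae^{-\tau s} = e^{-\tau s}A$ (trivial, since $e^{-\tau s}$ is a scalar and commutes with the matrix $A$) and the invertibility of $sE - Ae^{-\tau s}$, which is implicitly assumed wherever $H_d(s)$ is defined and is equivalent, after the factorization, to the invertibility of $f(s)E - A$, i.e. to $f(s)$ not being a generalized eigenvalue of the pencil $(A,E)$ — precisely the condition under which $G\big(f(s)\big)$ is well defined. So the main obstacle is essentially bookkeeping about the domain of definition rather than any genuine difficulty; on the common domain the identity holds as an equality of matrix-valued functions.
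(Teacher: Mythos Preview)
Your proof is correct and follows exactly the same route as the paper: factor the scalar $e^{-s\tau}$ out of $sE-Ae^{-s\tau}$ to obtain $e^{-s\tau}\big(se^{s\tau}E-A\big)$, invert, and recognize $C\big(f(s)E-A\big)^{-1}B$ as $G\big(f(s)\big)$. Your additional remarks on the domain of definition (invertibility of the pencil) are a nice clarification not present in the paper, but the core argument is identical.
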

\begin{proof} The result is straightforwardly obtained by injecting $f(s)$ in \eqref{eq:SSDescriptor} as:
\begin{eq}
\begin{array}{rcl}
H_d(s) &=& C(sE-Ae^{-s\tau})^{-1}B \\
& =& C(se^{s\tau}E-A)^{-1}Be^{{s\tau}} \\
&= &G(se^{s\tau})e^{s\tau}.
\end{array}
\end{eq}
\end{proof}

Then one extension of Theorem~\ref{thm:Loewner}  which makes feasible the interpolation with a single delay descriptor system as defined in \eqref{eq:SSDescriptordelay} can be done by using $f(s)$ as a variable substitution and applying the standard Loewner framework to the new transformed data. This first main result can be stated as follows:
\begin{theorem}[Delay Loewner framework]\label{thm:delayLowner} 
Let us consider $\rho = \nu = r$, $\tau \in \mathbb{R}$ and given  $(\lambda_i,\mathbf{r}_i,\mathbf{w}_i) $ and $(\mu_j,\mathbf{l}_j,\mathbf{v}_j)$  the \emph{right and left interpolation data} respectively, as stated in \eqref{eq:rightinterdata}-\eqref{eq:leftinterdata}. Assuming that $f(s)= se^{s\tau}$ is one-to-one in the interpolation points domain\footnote{This means that for any $h_1, h_2 \in \{\lambda_1,\dots,\lambda_r\}\cup\{\mu_1,\dots,\mu_r \} $, then $f(h_1) \neq f(h_2)$ if $h_1\neq h_2$, where $f(s) = se^{s\tau}$.} and let $\mathbf{G} = (\hat{E},\hat{A},\hat{B},\hat{C})$ be a realization satisfying right and left constraints from the data $(f(\lambda_i),\mathbf{r}_i,\mathbf{w}_ie^{-\lambda_i\tau}) $ and $(f(\mu_j),\mathbf{l}_j,\mathbf{v}_je^{-\mu_i\tau})$ constructed with Theorem \ref{thm:Loewner}. Then $\mathbf{H_d} = (\hat{E},\hat{A},\hat{B},\hat{C},\tau)$ satisfies the \emph{right}:
\begin{eq}
H_d(\lambda_i)\mathbf{r}_i = \mathbf{w}_i,~ i = 1,\dots r
\label{eq:delayRightConst}
\end{eq}
and \emph{left constraints}:
\begin{eq}
\mathbf{l}_jH_d(\mu_j) = \mathbf{v}_j,~ j = 1,\dots r
\label{eq:delayLeftConst}
\end{eq}
for the given right and left interpolation data.
\end{theorem}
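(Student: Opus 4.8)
The plan is to reduce the delay interpolation problem to the delay-free one through the variable substitution $f(s) = se^{s\tau}$, and then invoke Lemma~\ref{lemma:DSDrep} together with Theorem~\ref{thm:Loewner}. First I would observe that, since $f$ is assumed one-to-one on the set $\{\lambda_1,\dots,\lambda_r\}\cup\{\mu_1,\dots,\mu_r\}$, the transformed shift points $\{f(\lambda_i)\}$ and $\{f(\mu_j)\}$ inherit the distinctness properties of the original ones; in particular $f(\mu_i)\neq f(\lambda_j)$ whenever $\mu_i \neq \lambda_j$, so that no denominator $f(\mu_i)-f(\lambda_j)$ vanishes and the Loewner and shifted Loewner matrices built from the transformed data $(f(\lambda_i),\mathbf{r}_i,\mathbf{w}_ie^{-\lambda_i\tau})$ and $(f(\mu_j),\mathbf{l}_j,\mathbf{v}_je^{-\mu_j\tau})$ are well defined and Theorem~\ref{thm:Loewner} applies. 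This yields the realization $\mathbf{G}=(\hat E,\hat A,\hat B,\hat C)$ whose transfer function $G(s)=\hat C(s\hat E-\hat A)^{-1}\hat B$ satisfies the transformed right constraints $G(f(\lambda_i))\mathbf{r}_i = \mathbf{w}_ie^{-\lambda_i\tau}$ and the transformed left constraints $\mathbf{l}_jG(f(\mu_j)) = \mathbf{v}_je^{-\mu_j\tau}$, for $i,j=1,\dots,r$.

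Next I would apply Lemma~\ref{lemma:DSDrep}: the transfer function of $\mathbf{H_d}=(\hat E,\hat A,\hat B,\hat C,\tau)$ is exactly $H_d(s) = G(f(s))e^{s\tau}$. Evaluating at $s=\lambda_i$ and multiplying on the right by $\mathbf{r}_i$ gives $H_d(\lambda_i)\mathbf{r}_i = G(f(\lambda_i))\mathbf{r}_i\,e^{\lambda_i\tau} = \mathbf{w}_ie^{-\lambda_i\tau}e^{\lambda_i\tau} = \mathbf{w}_i$, which is the right constraint \eqref{eq:delayRightConst}. Symmetrically, evaluating at $s=\mu_j$ and multiplying on the left by $\mathbf{l}_j$ gives $\mathbf{l}_jH_d(\mu_j) = \mathbf{l}_jG(f(\mu_j))\,e^{\mu_j\tau} = \mathbf{v}_je^{-\mu_j\tau}e^{\mu_j\tau} = \mathbf{v}_j$, which is the left constraint \eqref{eq:delayLeftConst}. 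Since the scalar factors $e^{\pm\lambda_i\tau}$ and $e^{\pm\mu_j\tau}$ commute with the matrix--vector products, both cancellations are immediate.

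The only genuine obstacle, and the reason the one-to-one hypothesis on $f$ is imposed, lies in the first step: the classical Loewner construction of Theorem~\ref{thm:Loewner} is valid only when the (transformed) left and right shift points are pairwise distinct across the two families, so that the interpolation data are admissible and the resulting realization is nondegenerate. Injectivity of $f$ on the interpolation domain is precisely what transfers this admissibility from the original points to the substituted ones. Everything else is routine: propagating the data reweighting $\mathbf{w}_i \mapsto \mathbf{w}_ie^{-\lambda_i\tau}$, $\mathbf{v}_j\mapsto \mathbf{v}_je^{-\mu_j\tau}$ through the factorization of Lemma~\ref{lemma:DSDrep} and cancelling the exponential weights against the $e^{s\tau}$ prefactor.
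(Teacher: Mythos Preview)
Your proposal is correct and follows essentially the same route as the paper: apply Theorem~\ref{thm:Loewner} to the transformed data to obtain $G(f(\lambda_i))\mathbf{r}_i=\mathbf{w}_ie^{-\lambda_i\tau}$ and $\mathbf{l}_jG(f(\mu_j))=\mathbf{v}_je^{-\mu_j\tau}$, then invoke Lemma~\ref{lemma:DSDrep} to cancel the exponential weights. Your discussion of why the injectivity hypothesis on $f$ is needed is more explicit than the paper's proof, which simply assumes the transformed data are admissible and proceeds directly.
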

\begin{proof}
The result for the right constraints \eqref{eq:delayRightConst} is obtained as follows: first note that if the delay-free model $G(s)$ satisfies the right constraints for $(f(\lambda_i),\mathbf{r}_i,\mathbf{w}_ie^{-\lambda_i\tau})$, then one obtains:
\begin{eq}
G(f(\lambda_i))\mathbf{r}_i  = \mathbf{w}_ie^{-\lambda_i\tau} 
\end{eq}
then, it equivalently follows that:
\begin{eq}
G(f(\lambda_i))e^{\lambda_i\tau}\mathbf{r}_i =  \mathbf{w}_i  
\end{eq}
and by invoking Lemma \ref{lemma:DSDrep}, we obtain the result:
\begin{eq}
H_d(\lambda_i)\mathbf{r}_i =  \mathbf{w}_i.
\end{eq}
The left data constraints \eqref{eq:delayLeftConst} is similarly obtained.
\end{proof}

Theorem \ref{thm:delayLowner} provides a method to construct a model $\mathbf{H_d} = (E,A,B,C,\tau)$ whose transfer function $H_d(s) = C(sE-Ae^{-s\tau})^{-1}B$ interpolates given right and left constraints. This is possible by noticing that the problem can be rewritten as right and left interpolation constraints for the delay-free for which a realization is obtained by the standard Loewner framework as in Theorem \ref{thm:Loewner}. A similar reasoning enables the generalization of Theorem~\ref{thm:DerivativeLoewner} as stated follows.  

\begin{theorem}[Derivative delay Loewner framework]\label{thm:DerivdelayLoewner} 
Let us consider a given system represented by its transfer function $H(s)$,  $r$ shift points $\{s_1,\dots,s_r\} \in \mathbb{C}$ and $r$ tangential directions  $\{\mathbf{l}_1,\dots, \mathbf{l}_r\} \in \mathbb{C}^{1\times n_y}$, $\{\mathbf{r}_1,\dots,\mathbf{r}_r\}\in \mathbb{C}^{n_u \times 1}$. We assume that for all $k \neq m $, $f(s_k) \neq f(s_m)$, where $f(s) = se^{s\tau}$ ($f$ is one-to-one in the interpolation points domain). The $r$-dimensional  single delay model $\mathbf{\hat{H}} = (\hat{E},\hat{A},\hat{B},\hat{C},\tau)$, as in \eqref{eq:SSDescriptordelay}, interpolates $H(s)$ as follows, for $k=1,\dots , r$:
\begin{equation}
 H({s}_k)\mathbf{r}_k = \hat{H}({s}_k)\mathbf{r}_k,\,\, \mathbf{l}_kH(s_k) = \mathbf{l}_k\hat{H}({s}_k)\,
\end{equation}
\begin{equation}
\mathbf{l}_kH'(s_k)b_k = \mathbf{l}_kH'(s_k)\mathbf{r}_k,
\label{pbh21}
\end{equation} 
if only if the $r$-dimensional delay-free model $\mathbf{G} =  (\hat{E},\hat{A},\hat{B},\hat{C})$  is constructed with the derivative Loewner framework as in Theorem \ref{thm:DerivativeLoewner} for the shift points:
\begin{equation}
\big(\sigma_1,\dots,\sigma_r\big)  = \big(f(s_1),\dots,f(s_r)\big),\end{equation} 
and the transfer function evaluation:
\begin{equation}\big( G(\sigma_1), \dots, G(\sigma_r)\big) =\big( H(s_1) e^{-s_1\tau}, \dots,  H(s_r) e^{-s_r\tau}\big) 
\end{equation} 
and the derivative transfer function evaluation:
\begin{equation} \big(G'(\sigma_1),\dots, G'(\sigma_r)\big) = \big(F_1, \dots , F_r\big) 
\end{equation} 
where for $i = 1,\dots r$:
\begin{eq}
\begin{array}{rcl}
G'(\sigma_i) &=& F(H(s_i),H'(s_i),s_i) = F_i \\
&=& \big(H'(s_i) - \tau H(s_i)\big)\bigg(\frac{e^{-2 s_i\tau}}{1+\tau s_i}\bigg)
\end{array}
\end{eq}
\end{theorem}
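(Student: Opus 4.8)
The plan is to reduce the single-delay interpolation problem to the delay-free derivative interpolation of Theorem~\ref{thm:DerivativeLoewner}, exactly as Theorem~\ref{thm:delayLowner} reduced the non-derivative case to Theorem~\ref{thm:Loewner}. The bridge is Lemma~\ref{lemma:DSDrep}: writing the transfer function of $\mathbf{\hat{H}}=(\hat{E},\hat{A},\hat{B},\hat{C},\tau)$ as $\hat{H}(s) = G\big(f(s)\big)e^{s\tau}$ with $f(s)=se^{s\tau}$ turns every evaluation of the delay model at a node $s_k$ into an evaluation of its delay-free partner $\mathbf{G}=(\hat{E},\hat{A},\hat{B},\hat{C})$ at $\sigma_k=f(s_k)$. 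Since $f$ is assumed one-to-one on the interpolation nodes, the shifted nodes $\sigma_1,\dots,\sigma_r$ are pairwise distinct, so Theorem~\ref{thm:DerivativeLoewner} is applicable to $\mathbf{G}$; the argument then amounts to showing that the three families of constraints imposed on $\hat{H}$ at the $s_k$ are \emph{equivalent} to the derivative-Loewner constraints on $G$ at the $\sigma_k$ with the transformed data announced in the statement.

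For the zeroth-order (left and right tangential) conditions I would repeat the argument in the proof of Theorem~\ref{thm:delayLowner}: $G(\sigma_k)\mathbf{r}_k = H(s_k)e^{-s_k\tau}\mathbf{r}_k$ is, after multiplication by $e^{s_k\tau}$ and use of Lemma~\ref{lemma:DSDrep}, the same as $\hat{H}(s_k)\mathbf{r}_k = H(s_k)\mathbf{r}_k$, and symmetrically $\mathbf{l}_k G(\sigma_k) = \mathbf{l}_k H(s_k)e^{-s_k\tau}$ is the same as $\mathbf{l}_k\hat{H}(s_k)=\mathbf{l}_kH(s_k)$. This forces the shift points to be $\sigma_k=f(s_k)$ and the transfer data to be $G(\sigma_k)=H(s_k)e^{-s_k\tau}$, used tangentially inside Theorem~\ref{thm:DerivativeLoewner}.

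The core of the proof is the bitangential derivative condition, which is a chain-rule computation. Differentiating $\hat{H}(s)=G(f(s))e^{s\tau}$ gives $\hat{H}'(s) = G'(f(s))f'(s)e^{s\tau} + \tau G(f(s))e^{s\tau}$, and with $f'(s)=e^{s\tau}(1+\tau s)$ this reads $\hat{H}'(s) = G'(f(s))e^{2s\tau}(1+\tau s) + \tau G(f(s))e^{s\tau}$. Evaluating at $s=s_k$, pre-multiplying by $\mathbf{l}_k$ and post-multiplying by $\mathbf{r}_k$, and substituting the already-established zeroth-order identity $\mathbf{l}_k\,\tau G(\sigma_k)e^{s_k\tau}\mathbf{r}_k = \tau\,\mathbf{l}_k H(s_k)\mathbf{r}_k$, the requirement $\mathbf{l}_k\hat{H}'(s_k)\mathbf{r}_k = \mathbf{l}_kH'(s_k)\mathbf{r}_k$ collapses to $\mathbf{l}_k G'(\sigma_k)\mathbf{r}_k\,e^{2s_k\tau}(1+\tau s_k) = \mathbf{l}_k\big(H'(s_k)-\tau H(s_k)\big)\mathbf{r}_k$. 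Solving for $G'(\sigma_k)$ yields exactly $F_i = \big(H'(s_i)-\tau H(s_i)\big)\frac{e^{-2s_i\tau}}{1+\tau s_i}$, i.e. the derivative data to be fed into Theorem~\ref{thm:DerivativeLoewner}; reading the chain of equalities backwards delivers the converse, hence the ``if and only if''.

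The step I expect to be delicate is not the algebra but the invertibility it rests on: dividing by $1+\tau s_k$, which equals $f'(s_k)e^{-s_k\tau}$, is only legitimate when $f'(s_k)\neq 0$, so I would either add this non-degeneracy assumption explicitly for the interpolation nodes or note that it is implicit in ``$f$ one-to-one on the interpolation domain''; the denominator $1+\tau s_i$ appearing in $F_i$ shows this hypothesis is genuinely needed. A secondary point worth spelling out is that the map $\mathbf{\hat{H}}=(\hat{E},\hat{A},\hat{B},\hat{C},\tau)\leftrightarrow\mathbf{G}=(\hat{E},\hat{A},\hat{B},\hat{C})$ is a bijection, so that ``$\mathbf{\hat{H}}$ interpolates the original data'' and ``$\mathbf{G}$ satisfies the transformed derivative-Loewner data'' are literally the same statement — which is what makes the necessity direction meaningful once Theorem~\ref{thm:DerivativeLoewner} produces the explicit realization.
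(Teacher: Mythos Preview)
Your proposal is correct and follows essentially the same route as the paper: invoke Lemma~\ref{lemma:DSDrep} to write $\hat{H}(s)=G(f(s))e^{s\tau}$, reduce the left/right tangential conditions to those of Theorem~\ref{thm:DerivativeLoewner} at the shifted nodes $\sigma_k=f(s_k)$ with data $H(s_k)e^{-s_k\tau}$, and obtain the derivative data by a chain-rule differentiation (the paper differentiates the equivalent form $G(f(s))=\hat{H}(s)e^{-s\tau}$, which is the same computation rearranged). Your treatment is in fact more careful than the paper's sketch, since you make explicit both the need for $f'(s_k)\neq 0$ (i.e.\ $1+\tau s_k\neq 0$) and the bijection $(\hat{E},\hat{A},\hat{B},\hat{C},\tau)\leftrightarrow(\hat{E},\hat{A},\hat{B},\hat{C})$ underpinning the ``if and only if''.
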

\begin{proof}
First, one can note that a single delay descriptor system can be expressed as
\begin{equation}\label{descriptortheorem}
\hat{ H }_d(s) = \hat{C}(se^{s\tau}\hat{E}-\hat{A})^{-1}\hat{B}e^{{s\tau}} = G(f(s))e^{s\tau}  
\end{equation}
where
$G(s)$ is a descriptor system whose representation is $(\hat{E},\hat{A},\hat{B},\hat{C})$ and $f(s) = se^{s\tau}$. 
Thus one can use Loewner matrices to construct the realization of system $G(s)$ for the shift points $(\sigma_1,\dots \sigma_r) = (f(s_1),\dots,f(s_r))$  whose transfer function data are $(G(\sigma_1),\dots,G(\sigma_r)) = (H(s_1)e^{-s_1\tau},\dots,H(s_r)e^{-s_r\tau}) $. For the transfer function derivative data, one can take the derivative of  \eqref{descriptortheorem} written as $G(f(s)) = H_d(s)e^{-s\tau}$ with respect to $s$ as follows 
\[
G(f(s))f'(s) = \hat{H}_d'(s)e^{-s\tau}-\tau \hat{H}_d(s)e^{-s\tau}  
\]
and by solving the equation for $G'(s_k)$ one can obtain the result.
\end{proof}

This theorem allows to obtain a single delay descriptor system which interpolates any given transfer function $H(s)$. This can also be used in the case of data obtained through a signal generator, considering that the derivative is accessible as well. Applications of this result can be found in section \ref{sec:applications}. 

Now that the delay interpolation framework has been established, one might be interested in obtaining a good interpolant in the sense of the $\mathcal{H}_2$-norm as formulated in Problem \ref{pb:General}. We will now formulate mathematical conditions to select the optimal shift complex points $s_i$ and tangential directions $\mathbf{r}_i$ and $\mathbf{l}_i$. 

\section{$\mathcal{H}_2$ model reduction optimality conditions}\label{sec:Optimality}

\subsection{Preliminary results in $\mathcal{H}_2$ model reduction optimality conditions}

The first-order optimality conditions for the delay-free $\mathbf{\hat{H}} = (\hat{E},\hat{A},\hat{B},\hat{C},0)$ Problem \ref{pb:General} in terms of poles and residues are given in Theorem \ref{thm:opt}.
\begin{theorem}[\cite{gugercin2008h_2}]\label{thm:opt}
Assume that $\mathbf{H}$ and $\mathbf{\hat{H}}$ have semi-simple poles and suppose that $\mathbf{\hat{H}}$ is a $r^{th}$-order finite-dimensional model with transfer function:
\begin{equation}\label{sysr}
\hat{H}(s) = \sum_{k =1}^r \frac{\hat{c}_k\hat{b}_k^T}{s-\hat{\lambda}_k}.
\end{equation}
If $\mathbf{H}, \mathbf{\hat{H}}\in \mathcal{H}_2$ and $\mathbf{\hat{H}}$ of the form \eqref{eq:SSDescriptor}, is a local minimum of the $\mathcal{H}_2$ delay-free approximation problem, then the following interpolation equations hold:
\begin{equation}
H(-\hat{\lambda}_k)\hat{b}_k = \hat{H}(-\hat{\lambda}_k)\hat{b}_k,\,\,\hat{c}^T_kH(-\hat{\lambda}_k) = \hat{c}^T_k\hat{H}(-\hat{\lambda}_k)\,
\label{eq:pbh2}
\end{equation}
\begin{equation}
\hat{c}^T_kH'(-\hat{\lambda}_k)\hat{b}_k = \hat{c}^T_kH'(-\hat{\lambda}_k)\hat{b}_k,
\label{eq:pbh21}
\end{equation}
for all $k= 1,\dots,r$ where $\hat{\lambda}_k$ are the poles of $\mathbf{\hat{H}}$ and $\hat{b}_k$ and $\hat{c}_k$ are its tangential directions, respectively.
\end{theorem}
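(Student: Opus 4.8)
The plan is to parametrize the reduced model by its poles and residue directions exactly as in \eqref{sysr} and to characterize a local minimizer of $\|\mathbf{H}-\mathbf{\hat H}\|_{\mathcal{H}_2}^2$ by annihilating its gradient with respect to the $3r$ parameter blocks $\{\hat\lambda_k,\hat b_k,\hat c_k\}_{k=1}^r$. The workhorse is the pole-residue form of the inner product: for $\mathbf{H}\in\mathcal{H}_2$ and any $\mathbf{G}\in\mathcal{H}_2$ admitting the expansion $G(s)=\sum_{l=1}^r \hat c_l\hat b_l^T/(s-\hat\lambda_l)$ with $\mathrm{Re}(\hat\lambda_l)<0$, closing the defining contour integral in the left half-plane and applying the residue theorem yields
\[
\langle \mathbf{H},\mathbf{G}\rangle_{\mathcal{H}_2}=\sum_{l=1}^r \hat c_l^T H(-\hat\lambda_l)\,\hat b_l
\]
(up to an overall positive constant fixed by the normalization of $\langle\cdot,\cdot\rangle_{\mathcal{H}_2}$, which is immaterial for what follows). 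Expanding $\|\mathbf{H}-\mathbf{\hat H}\|_{\mathcal{H}_2}^2=\|\mathbf{H}\|_{\mathcal{H}_2}^2-2\langle\mathbf{H},\mathbf{\hat H}\rangle_{\mathcal{H}_2}+\|\mathbf{\hat H}\|_{\mathcal{H}_2}^2$ and applying the identity to all three terms, together with $\hat H(-\hat\lambda_k)=\sum_m\hat c_m\hat b_m^T/(-\hat\lambda_k-\hat\lambda_m)$, turns the squared error into the explicit scalar function
\[
\|\mathbf{H}-\mathbf{\hat H}\|_{\mathcal{H}_2}^2=\|\mathbf{H}\|_{\mathcal{H}_2}^2-2\sum_{k=1}^r \hat c_k^T H(-\hat\lambda_k)\hat b_k+\sum_{k,m=1}^r\frac{(\hat c_k^T\hat c_m)(\hat b_m^T\hat b_k)}{-(\hat\lambda_k+\hat\lambda_m)} .
\]
The semi-simplicity hypothesis is exactly what makes this expansion valid and smooth in the parameters and ensures that the reflected points $-\hat\lambda_k$ sit in the open right half-plane, away from the poles of $\mathbf{H}$ and of $\mathbf{\hat H}$.

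Then I would impose first-order stationarity. Differentiation with respect to $\hat c_k$ and to $\hat b_k$ is straightforward: each enters the cross term linearly and enters $\|\mathbf{\hat H}\|_{\mathcal{H}_2}^2$ through two occurrences that are symmetric under $k\leftrightarrow m$, so collecting terms and recognizing $\hat H(-\hat\lambda_k)$ in the result gives $H(-\hat\lambda_k)\hat b_k=\hat H(-\hat\lambda_k)\hat b_k$ from $\partial/\partial\hat c_k=0$ and $\hat c_k^T H(-\hat\lambda_k)=\hat c_k^T\hat H(-\hat\lambda_k)$ from $\partial/\partial\hat b_k=0$, i.e. the tangential interpolation conditions \eqref{eq:pbh2}. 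For the pole $\hat\lambda_k$ one has to track its two roles, as the argument $-\hat\lambda_k$ of $H$ and of $\hat H$ and inside the denominators $\hat\lambda_k+\hat\lambda_m$: the cross term contributes $2\,\hat c_k^T H'(-\hat\lambda_k)\hat b_k$, while the $l=k$ and $m=k$ terms of the quadratic sum contribute equally and add up to $-2\sum_m(\hat c_k^T\hat c_m)(\hat b_m^T\hat b_k)/(\hat\lambda_k+\hat\lambda_m)^2=-2\,\hat c_k^T\hat H'(-\hat\lambda_k)\hat b_k$, using $\hat H'(-\hat\lambda_k)=-\sum_m\hat c_m\hat b_m^T/(\hat\lambda_k+\hat\lambda_m)^2$. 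Setting the total to zero delivers the bitangential Hermite condition $\hat c_k^T H'(-\hat\lambda_k)\hat b_k=\hat c_k^T\hat H'(-\hat\lambda_k)\hat b_k$, i.e. \eqref{eq:pbh21}.

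The step I expect to be the main obstacle is precisely the differentiation in $\hat\lambda_k$: one must verify carefully that the contribution of the self-pole factor $1/(s-\hat\lambda_k)$ hidden inside $\hat H$ combines exactly with that of the explicit argument $-\hat\lambda_k$ to reconstruct $\hat H'$ evaluated at the reflected pole, with no spurious leftover term; the symmetry $(\hat c_k^T\hat c_m)(\hat b_m^T\hat b_k)=(\hat c_m^T\hat c_k)(\hat b_k^T\hat b_m)$ and the correct handling of the $l=m=k$ diagonal term are the delicate accounting points. As a cross-check I would also re-derive the equivalent Wilson/Sylvester-equation conditions on the cross Gramian of $(\mathbf{H},\mathbf{\hat H})$ and translate them into pole-residue form through the eigendecomposition of $\hat A$, which is the route of \cite{gugercin2008h_2}; the two derivations must agree. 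Finally, the conditions are compatible with $\mathbf{\hat H}$ real because complex poles come in conjugate pairs with conjugate residue directions, so the equations at $\hat\lambda_k$ and $\overline{\hat\lambda_k}$ are complex conjugates, and they are only necessary first-order conditions, in line with the "local minimum" hypothesis of the statement.
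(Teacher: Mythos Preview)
The paper does not actually prove this theorem: it is quoted as a preliminary result from \cite{gugercin2008h_2}, and the only hint at a proof is the remark after Theorem~\ref{thm:DelayOptim} that the argument proceeds ``using the \ldots\ poles and residues decomposition''. Your proposal is precisely that pole--residue argument of \cite{gugercin2008h_2}: write $\langle\mathbf{H},\mathbf{\hat H}\rangle_{\mathcal{H}_2}$ and $\|\mathbf{\hat H}\|_{\mathcal{H}_2}^2$ via the residue theorem, obtain the explicit cost in the parameters $(\hat\lambda_k,\hat b_k,\hat c_k)$, and set the gradients to zero. The derivation you sketch is correct, including the point you flag as delicate (the $\hat\lambda_k$-derivative of $\|\mathbf{\hat H}\|_{\mathcal{H}_2}^2$ splitting into the $l=k$ and $m=k$ contributions whose sum rebuilds $\hat c_k^T\hat H'(-\hat\lambda_k)\hat b_k$, with the diagonal $l=m=k$ term handled consistently). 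Your cross-check via the Wilson/Sylvester Gramian conditions is exactly the alternative derivation in \cite{gugercin2008h_2}, so there is nothing to add: your proof is the one the paper defers to.
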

As previously for the interpolation conditions, the $\mathcal{H}_2$-optimality conditions are now extended to the single delay case. 

\subsection{Results in single delay model reduction $\mathcal{H}_2$  optimality conditions}

\begin{proposition}  Using the notation in Lemma \ref{lemma:DSDrep}, $\lambda \in \mathbb{C}$ is a pole of $H_d(s) = C(sE-Ae^{-s\tau})^{-1}B$ if and only if  $f(\lambda) \in \mathbb{C}$ is a generalized eigenvalue of the pair $(A,E)$. 
\end{proposition}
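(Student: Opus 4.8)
The plan is to reduce the delay characteristic function to the matrix pencil of the pair $(A,E)$ by pulling out the scalar factor $e^{-s\tau}$, exactly as in the proof of Lemma~\ref{lemma:DSDrep}. Write $n$ for the common size of $E$ and $A$. For every $s\in\mathbb{C}$ one has the matrix identity
\[ sE-Ae^{-s\tau} = e^{-s\tau}\bigl(se^{s\tau}E-A\bigr) = e^{-s\tau}\bigl(f(s)E-A\bigr), \]
with $f(s)=se^{s\tau}$. Taking determinants and using $\det(\alpha M)=\alpha^{n}\det(M)$ for $\alpha\in\mathbb{C}$ gives $\det\bigl(sE-Ae^{-s\tau}\bigr)=e^{-ns\tau}\,\det\bigl(f(s)E-A\bigr)$. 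Since $e^{-ns\tau}\neq 0$ for all $s\in\mathbb{C}$, the zero sets of the two entire functions $s\mapsto\det(sE-Ae^{-s\tau})$ and $s\mapsto\det(f(s)E-A)$ coincide; in particular $\det\bigl(\lambda E-Ae^{-\lambda\tau}\bigr)=0$ if and only if $\det\bigl(f(\lambda)E-A\bigr)=0$, i.e. if and only if $f(\lambda)$ is a generalized eigenvalue of $(A,E)$.

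It then remains to pass from zeros of the characteristic function to poles of $H_d(s)=C(sE-Ae^{-s\tau})^{-1}B$. Here I would invoke the factorization $H_d(s)=G(f(s))e^{s\tau}$ from Lemma~\ref{lemma:DSDrep}, where $G(s)=C(sE-A)^{-1}B$ is the transfer function of the delay-free realization $(E,A,B,C)$. Because $e^{s\tau}$ is entire and nowhere zero, and $f$ is a non-constant entire function (so that $f(s)\to f(\lambda)$ with $f(s)\neq f(\lambda)$ for $s\neq\lambda$ near $\lambda$, and $f$ is bounded near $\lambda$), $\lambda$ is a pole of $H_d$ if and only if $f(\lambda)$ is a pole of $G$. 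Since the poles of $G(s)=C(sE-A)^{-1}B$ are the generalized eigenvalues of $(A,E)$, chaining the two equivalences yields the claim.

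The algebra above is immediate, so the only delicate point is the pole/eigenvalue bookkeeping in the last step: one must be explicit about whether ``pole of $H_d$'' refers to a pole of the (matrix) transfer function, which can differ from a root of the characteristic function when there are uncontrollable or unobservable modes, or directly to a root of $\det(sE-Ae^{-s\tau})$. Under the minimality assumption implicit throughout the $\mathcal{H}_2$ setting the two notions agree and the statement holds verbatim; without it the determinant computation of the first paragraph still delivers the equivalence at the level of characteristic roots, which is what the subsequent optimality analysis actually uses.
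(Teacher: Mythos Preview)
Your argument is correct and follows the same route as the paper: the paper's one-line proof reads ``$\lambda$ is a pole of $H_d$ $\iff$ $f(\lambda)$ is a pole of $G$ $\iff$ $f(\lambda)\in\sigma(A,E)$'', which is precisely your second paragraph via the factorization $H_d(s)=G(f(s))e^{s\tau}$ of Lemma~\ref{lemma:DSDrep}. Your additional determinant identity and the minimality caveat about poles versus characteristic roots are extra rigor that the paper omits, but they do not change the underlying approach.
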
  
\begin{proof}
$\lambda$ is pole of $H(s)$ $\iff $ $w(\lambda)$ is pole of $G(s)$ $\iff $  $w(\lambda) \in$ $\sigma(A,E)$, the spectrum of  pencil $(A,E)$. 
\end{proof}

Now let recall that the Lambert $W$ function is a multivalued (except at $0$) function associating for each $k^{th}$ complex branch, a complex number $W_k(z)$ such that
\[  z = W_k(z)e^{W_k(z)}, \hspace{0.5cm} k\in \mathbb{Z}.\]
Consequently any TDS can be viewed as a system with an infinite number of poles. 

Now, as in the delay-free case, analog optimality conditions are derived in the case where the approximation model has the form \eqref{eq:SSDescriptordelay}.

\begin{theorem}[Delay $\mathcal{H}_2$-optimality conditions]\label{thm:DelayOptim}
Assume that $\mathbf{H}$ and $\mathbf{\hat{H}_d}$ have semi-simple poles and suppose that $\mathbf{\hat{H}_d}$ is a $r^{th}$-order single delay model whose transfer reads:
\begin{equation}\label{sysr}
\hat{H}_d(s) = \hat{C}(s\hat{E} - \hat{A}e^{-s\tau})^{-1}\hat{B} =  \sum\limits_{k =1}^r \frac{\hat{c}_k\hat{b}_k^T}{s-\hat{\alpha}_ke^{-s\tau}}.
\end{equation}
If $\mathbf{H}, \mathbf{\hat{H}_d}\in \mathcal{H}_2$ and $\mathbf{\hat{H}}$ is a local minimum of the $\mathcal{H}_2$ approximation problem, then the following interpolation equations hold:
\begin{equation}\label{eq:pbh2delay}
\hspace{-0.03cm} H(-\hat{\lambda}_{k,p})\hat{b}_k = \hat{H}_d(-\hat{\lambda}_{k,p})\hat{b}_k,~\hat{c}^T_kH(-\hat{\lambda}_{k,p}) = \hat{c}^T_k\hat{H}_d(-\hat{\lambda}_{k,p})\,
\end{equation}
\begin{equation}\label{eq:pbh21delay}
\hat{c}^T_kH(-\hat{\lambda}_{k,p})\hat{b}_k = \hat{c}^T_k\hat{H}_d(-\hat{\lambda}_{k,p})\hat{b}_k,
\end{equation}
for all $p= 1,\dots,r$ and $k \in \mathbb{Z}$ where, for given $p$, $\hat{\lambda}_{k,p}$ are defined by:
\begin{equation}\label{eq:polesinfinity}
\lambda_{k,p}= \frac{1}{\tau}W_k(\tau \alpha_p) 
\end{equation} where $W_k$ is the $k^{th}$ branch of the multivalued Lambert function.
\end{theorem}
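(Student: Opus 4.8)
The plan is to reduce the delayed $\mathcal{H}_2$-optimality statement to the delay-free conditions of Theorem~\ref{thm:opt} through the variable substitution $f(s)=se^{s\tau}$ already exploited in Lemma~\ref{lemma:DSDrep} and Theorems~\ref{thm:delayLowner}--\ref{thm:DerivdelayLoewner}. Writing $\mathbf{\hat{H}_d}=(\hat E,\hat A,\hat B,\hat C,\tau)$, Lemma~\ref{lemma:DSDrep} gives $\hat H_d(s)=G(f(s))e^{s\tau}$ with $G=(\hat E,\hat A,\hat B,\hat C)$, and by the Proposition the poles of $\hat H_d$ are exactly the preimages under $f$ of the generalized eigenvalues $\hat\alpha_p\in\sigma(\hat A,\hat E)$; these preimages are enumerated by the Lambert branches as $\hat\lambda_{k,p}=\tfrac1\tau W_k(\tau\hat\alpha_p)$, which is precisely \eqref{eq:polesinfinity}. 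The first step is therefore to make this pole/residue bookkeeping explicit: expand $G(s)=\sum_p \hat c_p\hat b_p^T/(s-\hat\alpha_p)$, substitute $s\mapsto f(s)$, multiply by $e^{s\tau}$, and check that the resulting partial-fraction expansion of $\hat H_d$ matches the form $\sum_k \hat c_k\hat b_k^T/(s-\hat\alpha_k e^{-s\tau})$ displayed in \eqref{sysr}, identifying how the tangential directions $\hat b_k,\hat c_k$ relate to those of $G$.

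The second and central step is to transport the $\mathcal{H}_2$ inner product through the substitution. Since $\|\mathbf{H}-\mathbf{\hat H_d}\|_{\mathcal{H}_2}$ is being minimized over $r$th-order single-delay models with the delay $\tau$ fixed by Problem~\ref{pb:General}, a perturbation of $\mathbf{\hat H_d}$ corresponds (via the bijection $f$ on the relevant domain) to a perturbation of the delay-free $G$, so a local minimizer of the delayed problem induces a stationary point of the associated delay-free objective $\|He^{-s\tau}-G\circ f\|$ in the appropriate weighted $\mathcal{H}_2$ geometry. I would set $\tilde H(s):=H(s)e^{-s\tau}$ so that the error factors as $(H-\hat H_d)(s)=\big(\tilde H(f^{-1}\!(\,\cdot\,))-G(\,\cdot\,)\big)$ composed with $f$, mirror the Gram\'{i}an/Wilson derivation behind Theorem~\ref{thm:opt}, and extract the first-order conditions: $\tilde H$ and $G$ must bitangentially Hermite-interpolate at $-\hat\alpha_p$, i.e. $\tilde H(-\hat\alpha_p)\hat b_p=G(-\hat\alpha_p)\hat b_p$, $\hat c_p^T\tilde H(-\hat\alpha_p)=\hat c_p^T G(-\hat\alpha_p)$, and $\hat c_p^T\tilde H'(-\hat\alpha_p)\hat b_p=\hat c_p^T G'(-\hat\alpha_p)\hat b_p$, for $p=1,\dots,r$.

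The third step is to translate these delay-free interpolation conditions at the single point $-\hat\alpha_p$ back into conditions on $H$ and $\hat H_d$ at the infinitely many points $-\hat\lambda_{k,p}=-\tfrac1\tau W_k(\tau\hat\alpha_p)$, $k\in\mathbb Z$. Because $f(-\hat\lambda_{k,p})=-\hat\alpha_p$ for every branch $k$ (this is the defining identity of the Lambert function), evaluating the factorization $\hat H_d(s)=G(f(s))e^{s\tau}$ and its derivative at each $-\hat\lambda_{k,p}$ converts the three equalities above into \eqref{eq:pbh2delay}--\eqref{eq:pbh21delay}; the exponential prefactors $e^{-\hat\lambda_{k,p}\tau}$ and the chain-rule factor $f'(-\hat\lambda_{k,p})$ are nonzero (the latter by the one-to-one hypothesis / $1+\tau s\neq 0$ at these points), so they cancel cleanly on both sides and leave exactly the stated relations. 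The main obstacle I anticipate is the second step: making rigorous the claim that stationarity of the infinite-dimensional delayed $\mathcal{H}_2$ functional corresponds to stationarity of the substituted delay-free functional — one must argue that the admissible perturbations match up under $f$, control the contour deformation/weighting introduced by the change of variables $i\omega\mapsto f(i\omega)$ in the integral defining $\langle\cdot,\cdot\rangle_{\mathcal{H}_2}$, and verify the interchange of differentiation and integration; the pole-relabelling in steps one and three is comparatively routine bookkeeping once the Lambert identity $f(\hat\lambda_{k,p})=\hat\alpha_p$ is in hand.
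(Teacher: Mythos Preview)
Your route differs from the paper's, and it has a genuine gap.

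The paper does not go through the substitution $f(s)=se^{s\tau}$ at all. Its argument is the direct one: expand $\hat H_d$ as an \emph{infinite} pole--residue sum over all poles $\hat\lambda_{k,p}$ (obtained from each $\hat\alpha_p$ via the Lambert branches), and then rerun verbatim the variational computation behind Theorem~\ref{thm:opt} from \cite{gugercin2008h_2}. In that computation the identity $\langle \mathbf H,\mathbf{\hat H_d}\rangle_{\mathcal H_2}=\sum_{k,p}\hat c_{k,p}^T H(-\hat\lambda_{k,p})\hat b_{k,p}$ comes from residues, not from any change of variable on the imaginary axis, so the contour issues you flag in step two never arise. Differentiating with respect to the finite parameters $(\hat E,\hat A,\hat B,\hat C)$ then yields \eqref{eq:pbh2delay}--\eqref{eq:pbh21delay} for every $k\in\mathbb Z$.

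Your substitution programme breaks down in two places. First, the obstacle you yourself identify in step two is not merely technical: $f$ sends the imaginary axis to a spiral, so the pushed-forward functional is not an $\mathcal H_2$ norm for $G$ in any standard sense, and there is no reason the delay-free first-order conditions of Theorem~\ref{thm:opt} (which characterize stationarity of a genuine $\mathcal H_2$ norm) should coincide with stationarity of that weighted object. Second, step three contains a concrete error. From $\tau\hat\lambda_{k,p}=W_k(\tau\hat\alpha_p)$ one gets $f(\hat\lambda_{k,p})=\hat\lambda_{k,p}e^{\tau\hat\lambda_{k,p}}=\hat\alpha_p$, but
\[
f(-\hat\lambda_{k,p})=-\hat\lambda_{k,p}e^{-\tau\hat\lambda_{k,p}}=-\frac{\hat\lambda_{k,p}^2}{\hat\alpha_p}\neq -\hat\alpha_p
\]
in general. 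Hence the delay-free interpolation conditions at $-\hat\alpha_p$ do not pull back to conditions at $-\hat\lambda_{k,p}$ via $f$, and the bridge you build in step three collapses. The direct infinite-decomposition argument avoids both issues.
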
  

\begin{proof} The proof is similar those of Theorem  \ref{thm:opt} as in \cite{gugercin2008h_2} using the infinite poles and residues decomposition of model $\hat{H}_d(s)$.
\end{proof}

Theorem \ref{thm:DelayOptim} states that the optimal model $\mathbf{\hat{H}_d}$ of Problem \ref{pb:General}, if it exists, satisfies an infinite number of optimality conditions related with the Lambert $W$ function and the general eigenvalues of $(\hat{E},\hat{A})$. Nevertheless, given $\tau \in \mathbb{R}$, as $\hat{H}_d(s) = \hat{C}(s\hat{E}-\hat{A}e^{-s\tau})^{-1}\hat{B}$ is parametrized by a finite number of variables, and it can be shown that it lives in a sub-manifold of dimension $n(n_u+n_y)$. This can be simply shown by noticing that there is a simple isomorphism between $(E,A,B,C,\tau)$ and $(E,A,B,C,0)$ and the  last one is parametrized by $n(n_u+n_y)$ variables as it can be seen in \cite{byrnes1979applications,dooren2007}. All the optimality conditions cannot be achieved in the general case. However, as stated in the following proposition, in a given particular case, the infinite optimality conditions fall into an equivalent finite number of relationships.
\begin{proposition} Assuming the same hypotheses of Theorem \ref{thm:DelayOptim} about $\mathbf{H}$ and $\mathbf{\hat{H}}$. Moreover if model $\mathbf{H} = (E,A,B,C,\tau)$, then the infinite optimality conditions of Theorem \ref{thm:DelayOptim} can be resumed to a finite number of optimality conditions as follows for $p=1,\dots r$ :
\begin{equation}\label{eq:pbh2delayfinite}
H(-\hat{\lambda}_{1,p})\hat{b}_k = \hat{H}(-\hat{\lambda}_{1,p})\hat{b}_k,\,\,\hat{c}^T_kH(-\hat{\lambda}_{1,p}) = \hat{c}^T_k\hat{H}(-\hat{\lambda}_{1,p})\, 
\end{equation}
\begin{equation}\label{eq:pbh21delayfinite}
\hat{c}^T_kH(-\hat{\lambda}_{1,p})\hat{b}_k = \hat{c}^T_k\hat{H}(-\hat{\lambda}_{1,p})\hat{b}_k,
\end{equation}
where $\lambda_{1,p}= \frac{1}{\tau}W_1(\tau \alpha_p) $ and $W_1$ is the evaluation of the  Lambert function along its $1^{st}$ branch.
\end{proposition}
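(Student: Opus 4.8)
The plan is to exploit the structural coincidence that arises when the target model $\mathbf{H}$ itself is a single-delay model of the same form $(E,A,B,C,\tau)$ with the \emph{same} delay $\tau$ as the reduced model $\mathbf{\hat H}_d$. Under this hypothesis, I would first invoke Lemma~\ref{lemma:DSDrep} on \emph{both} $\mathbf{H}$ and $\mathbf{\hat H}_d$, writing $H(s) = G\big(f(s)\big)e^{s\tau}$ and $\hat H_d(s) = \hat G\big(f(s)\big)e^{s\tau}$ with the common substitution $f(s) = se^{s\tau}$, where $G$ and $\hat G$ are the associated delay-free transfer functions. The key observation is that the infinite family of interpolation points $\{-\hat\lambda_{k,p}\}_{k\in\mathbb{Z}}$ attached to a fixed $p$ all collapse onto a \emph{single} point under $f$: indeed, from \eqref{eq:polesinfinity}, $-\hat\lambda_{k,p} = -\tfrac{1}{\tau}W_k(\tau\hat\alpha_p)$, and applying $f(s)=se^{s\tau}$ together with the defining identity $W_k(z)e^{W_k(z)} = z$ of the Lambert function gives $f(-\hat\lambda_{k,p}) = -\hat\alpha_p$ for \emph{every} branch $k$. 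Hence all the conditions \eqref{eq:pbh2delay}--\eqref{eq:pbh21delay} indexed by $k$, for fixed $p$, pull back to one and the same interpolation condition on the delay-free pair $(G,\hat G)$ at the point $-\hat\alpha_p$.

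Second, I would translate the delay-domain interpolation conditions into delay-free ones. Using $H(s) = G(f(s))e^{s\tau}$, the first equation in \eqref{eq:pbh2delay} reads $G(f(-\hat\lambda_{k,p}))e^{-\hat\lambda_{k,p}\tau}\hat b_k = \hat G(f(-\hat\lambda_{k,p}))e^{-\hat\lambda_{k,p}\tau}\hat b_k$; since $e^{-\hat\lambda_{k,p}\tau}\neq 0$ and $f(-\hat\lambda_{k,p}) = -\hat\alpha_p$ independently of $k$, this is equivalent to $G(-\hat\alpha_p)\hat b_k = \hat G(-\hat\alpha_p)\hat b_k$, which no longer involves $k$ at all. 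The same reduction applies verbatim to the left tangential condition and, after differentiating $H(s)=G(f(s))e^{s\tau}$ via the chain rule (exactly the computation already carried out in the proof of Theorem~\ref{thm:DerivdelayLoewner}), to the bitangential Hermite condition \eqref{eq:pbh21delay}. Therefore the entire doubly-indexed set $\{(k,p): k\in\mathbb{Z},\, p=1,\dots,r\}$ of conditions is equivalent to the singly-indexed set $\{p=1,\dots,r\}$ of conditions, and one may as well pick the representative branch $k=1$, i.e. evaluate at $-\hat\lambda_{1,p} = -\tfrac{1}{\tau}W_1(\tau\hat\alpha_p)$, which yields precisely \eqref{eq:pbh2delayfinite}--\eqref{eq:pbh21delayfinite}.

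**The main obstacle** is making the ``equivalence'' rigorous in the reverse direction, i.e. arguing that the $r$ conditions at $\{-\hat\lambda_{1,p}\}$ actually recover \emph{all} the original infinite conditions rather than merely being implied by them: this rests on the fact that $f$ maps each $-\hat\lambda_{k,p}$ to the same value $-\hat\alpha_p$, so that a condition holding at that common image point automatically holds at every preimage — but one must be careful that the tangential directions $\hat b_k,\hat c_k$ appearing in \eqref{eq:polesinfinity}-style residue decomposition \eqref{sysr} are genuinely branch-independent, which follows from the partial-fraction form $\sum_k \hat c_k\hat b_k^T/(s-\hat\alpha_k e^{-s\tau})$ where the residue directions are attached to $\hat\alpha_p$, not to the individual poles $\hat\lambda_{k,p}$. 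I would also need to note the well-definedness caveat: $\hat\alpha_p \neq 0$ so that $\tau\hat\alpha_p$ lies in the domain where the Lambert branches are defined, and the one-to-one assumption on $f$ from Theorem~\ref{thm:DerivdelayLoewner} guarantees the $-\hat\lambda_{1,p}$ are distinct. The rest is the routine chain-rule bookkeeping already displayed in the proofs of Lemma~\ref{lemma:DSDrep} and Theorem~\ref{thm:DerivdelayLoewner}, so I would simply cite those and the proof of Theorem~\ref{thm:DelayOptim} rather than redo the computation.
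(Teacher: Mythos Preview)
Your proposal is correct and follows essentially the same route as the paper: both arguments rest on applying Lemma~\ref{lemma:DSDrep} to \emph{both} $H$ and $\hat H_d$ (which is where the hypothesis that $\mathbf{H}$ also has the $(E,A,B,C,\tau)$ form enters), together with the Lambert identity $f(\lambda_{k,p})=f(\lambda_{1,p})$ for all $k$. The only cosmetic difference is that the paper keeps the computation in the delay domain, writing $H(\lambda_{k,p}) = H(\lambda_{1,p})\,e^{\tau(\lambda_{k,p}-\lambda_{1,p})}$ and observing that the same scalar factor multiplies $\hat H$, whereas you cancel the exponential outright and phrase the reduced condition on the delay-free pair $(G,\hat G)$ at $-\hat\alpha_p$; these are equivalent rearrangements of the same identity.
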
 
\begin{proof}
One have to prove that the finite conditions \eqref{eq:pbh2delayfinite}-\eqref{eq:pbh21delayfinite} imply \eqref{eq:pbh2delay}-\eqref{eq:pbh21delay}. This is possible due to the fact that:
\[f(\lambda_{1,p}) = \lambda_{1,p} e^{\lambda_{1,p}\tau} = \lambda_{k,p} e^{\lambda_{k,p}\tau} = f(\lambda_{k,p}),~\forall k\in \mathbb{Z}\]
Thus, using the decomposition given in Lemma \ref{lemma:DSDrep}, it can be shown that:
\begin{equation*}
\begin{array}{rcl}
H(\lambda_{k,p}) &=& G(f(\lambda_{k,p}))e^{\tau\lambda_{k,p}}  \\ &=& G(f(\lambda_{1,p}))e^{\tau\lambda_{1,p}}e^{\tau(\lambda_{k,p} - \lambda_{1,p})} \\
 &=&  H(\lambda_{1,p}) e^{\tau(\lambda_{k,p} - \lambda_{1,p})} 
\end{array}
\end{equation*}
and finally
\[ H(\lambda_{k,p})\mathbf{b}_k = CH(\lambda_{1,p})\mathbf{b}_k = C\hat{H}(\lambda_{1,p})\mathbf{b}_k  = \hat{H}(\lambda_{k,p})\mathbf{b}_k ,\] $\forall k \in \mathbb{Z}$, where $C = e^{\tau(\lambda_{k,p}-\lambda_{1,p})}$. The reasoning is analog for the right and derivative constraints, which concludes the proof.
\end{proof}

Now that the optimality conditions have been derived, next section is dedicated to the derivation of an algorithm, based on \textbf{TF-IRKA} \cite{beattie2012realization}, denoted \textbf{dTF-IRKA} for \emph{delay Transfer Function Iterative Rational Krylov Algorithm}, which allows to obtain a sub-optimal model of the form \eqref{eq:SSDescriptordelay}, satisfying  $n(n_u+n_y)$ optimality conditions.  


\section{Delay TF-IRKA algorithm}\label{sec:delayTFIRKA}

The algorithm proposed in this section permits to derive a system which satisfies the optimality conditions for $r$ complex points. The idea behind is based on \textbf{TF-IRKA} \cite{beattie2012realization} which finds a model satisfying the optimality conditions in \eqref{eq:pbh2}-\eqref{pbh21} using a fixed point iteration.  For each iteration the new shift points will be the poles located in the $1^{st}$ branch of the Lambert function, only.  This algorithm is celebrated as \textbf{delay TF-IRKA} (or \textbf{dTF-IRKA}) and is summed up as follows:

\begin{algorithm}[H]
\caption{ dTF-IRKA }
\label{TFIRKAalgo}
\begin{algorithmic}[1]
\State \textbf{Initialization:} transfer function $H(s)$, approximation order $r\in \mathbb{N}^*$, $\sigma^0 = \{ {\sigma_1^0,\dots,\sigma_r^0}\}\in \mathbb{C}$ initial interpolation points and tangential directions $\{b_1,\dots,b_r\} \in \mathbb{C}^{n_u \times 1}$ and $\{c_1,\dots,c_r\}\in \mathbb{C}^{n_y\times 1}$, $W_1$ the first branch of the Lambert function.
\While{not convergence}
	\State\textbf{Build}  $\big(\hat{E}$, $\hat{A}$, $\hat{B}$, $\hat{C},\tau\big)$ using Theorem~\ref{thm:DerivdelayLoewner}.
	\State \textbf{Solve} the generalized eigenvalue problem in $x_i^{(k)}$, $y_i^{(k)}$ and $\lambda_i^{(k)}$, for $i=1,\dots,r$
	\begin{eq}
	\begin{array}{rcl}
	\hat{A}^{(k)}x_i^{(k)}&=&\lambda_i^{(k)}\hat{E}^{(k)}x_i^{(k)} \\
	y_i^{(k)*}\hat{E}^{(k)}x_j^{(k)} &=& \delta_{i,j}
	\end{array}
	\end{eq}
	\State \emph{Set}, for $i=1,\dots,r$
	\begin{eq}
	\begin{array}{rcl}
	\sigma_i^{(k+1)} &\gets&  -\frac{1}{\tau}W_1\big(\tau\lambda_i^{(k)}\big) \\
	b_i^{(k+1)T} &\gets& y_i^{(k)}\hat{B}^{(k)}\\ 
	c^{(k+1)}_i &\gets& \hat{C}^{(k)}x_i^{(k)}
	\end{array}
	\end{eq}
\EndWhile
	\State \textbf{Ensure} conditions \eqref{eq:pbh2delayfinite}-\eqref{eq:pbh21delayfinite} are satisfied.
	\State\textbf{Build}  $\mathbf{\hat{H}} = \big(\hat{E}$, $\hat{A}$, $\hat{B}$, $\hat{C},\tau\big)$.

\end{algorithmic}
\end{algorithm}

If the algorithm converges, the approximation model will satisfy optimality conditions given by \eqref{eq:pbh2delayfinite}-\eqref{eq:pbh21delayfinite} and will therefore be suboptimal. The  \textbf{dTF-IRKA} then allows to obtain good (in the sense of the metric given in Problem \ref{pb:General}) shift points and tangential directions for which the interpolation problem will lead to a good approximation model. 

As a remark, one should note that the Lambert function evaluated in the $1^{st}$ branch can sometimes associate a real number to a complex one. In this way, the shift points might not be a closed set (by conjugation) and the obtained single delay interpolation model will not have a real representation. To avoid this, one should enforce at each iteration the shift points to be closed by conjugation. 
 
\section{Applications}\label{sec:applications}

This section is dedicated to the application of both methods proposed in Sections \ref{sec:delayinterpolation} and  \ref{sec:delayTFIRKA}, namely, the delay model interpolation and optimal $\mathcal H_2$ model approximation. We will emphasize the potential benefit and effectiveness of the proposed approach.

\subsection{Example 1: rational interpolation}

Let us consider a dynamical model governed by the following delay model $\mathbf{H} \in \mathcal{H}_2$ whose transfer function is given as
\begin{equation}\label{eq:example1TF}
H(s) =\dfrac{2s+1.3 e^{-s}}{s^2+1.3s e^{-s}+0.3e^{-2s}}.
\end{equation}

First, model \eqref{eq:example1TF} (which is obviously of order 2) is approximated by a delay-free model of order $r=2$ using the \textbf{TF-IRKA} (Figure \ref{fig:Exemple1a}, green dashed dotted curve). It is also interpolated using the delay Loewner framework with derivatives as stated in Theorem \ref{thm:DerivdelayLoewner} whose delay is set to $\tau =1$, at the shift points $s_1 = 0.1$ and $s_2 = 1$ (Figure \ref{fig:Exemple1a}, red dashed thick curve). All the results are reported on  Figure \ref{fig:Exemple1a}, and compared to the original model $H(s)$ (solid blue line). 

\begin{figure}[here]
  	\centering
	\includegraphics[width=0.75\textwidth]{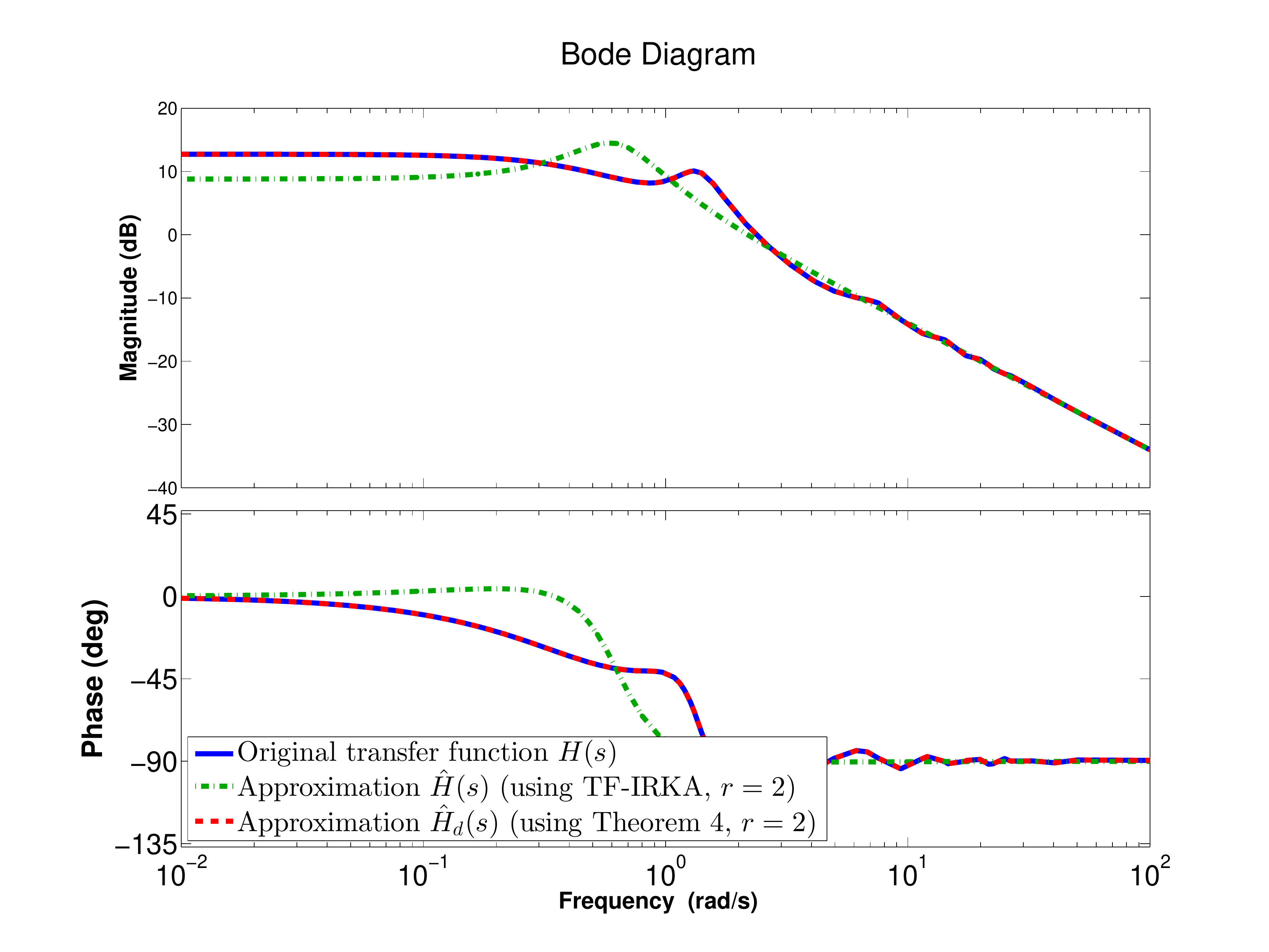}
	 \caption{Bode diagram of original model (blue solid line), model of order $r=2$ approximated with \textbf{TF-IRKA} (green dashed dotted curve) and delay interpolation model using Theorem \ref{thm:DerivdelayLoewner} of order $r=2$ (red dashed line).}   
	 \label{fig:Exemple1a}
\end{figure}

Figure \ref{fig:Exemple1a} shows that model defined in \eqref{eq:example1TF} is well interpolated by a delay model obtained by Theorem \ref{thm:DerivdelayLoewner}, for any interpolation points. Indeed, since the transfer function \eqref{eq:example1TF} has a realization of the form \eqref{eq:SSDescriptordelay} of order 2 it can be reconstructed using the Theorem \ref{thm:DerivdelayLoewner}. Figure \ref{fig:Exemple1b} shows quite similar results but where \textbf{TF-IRKA} has targeted an order $r=4$. 
\begin{figure}[here]
  	\centering
	\includegraphics[width=0.75\textwidth]{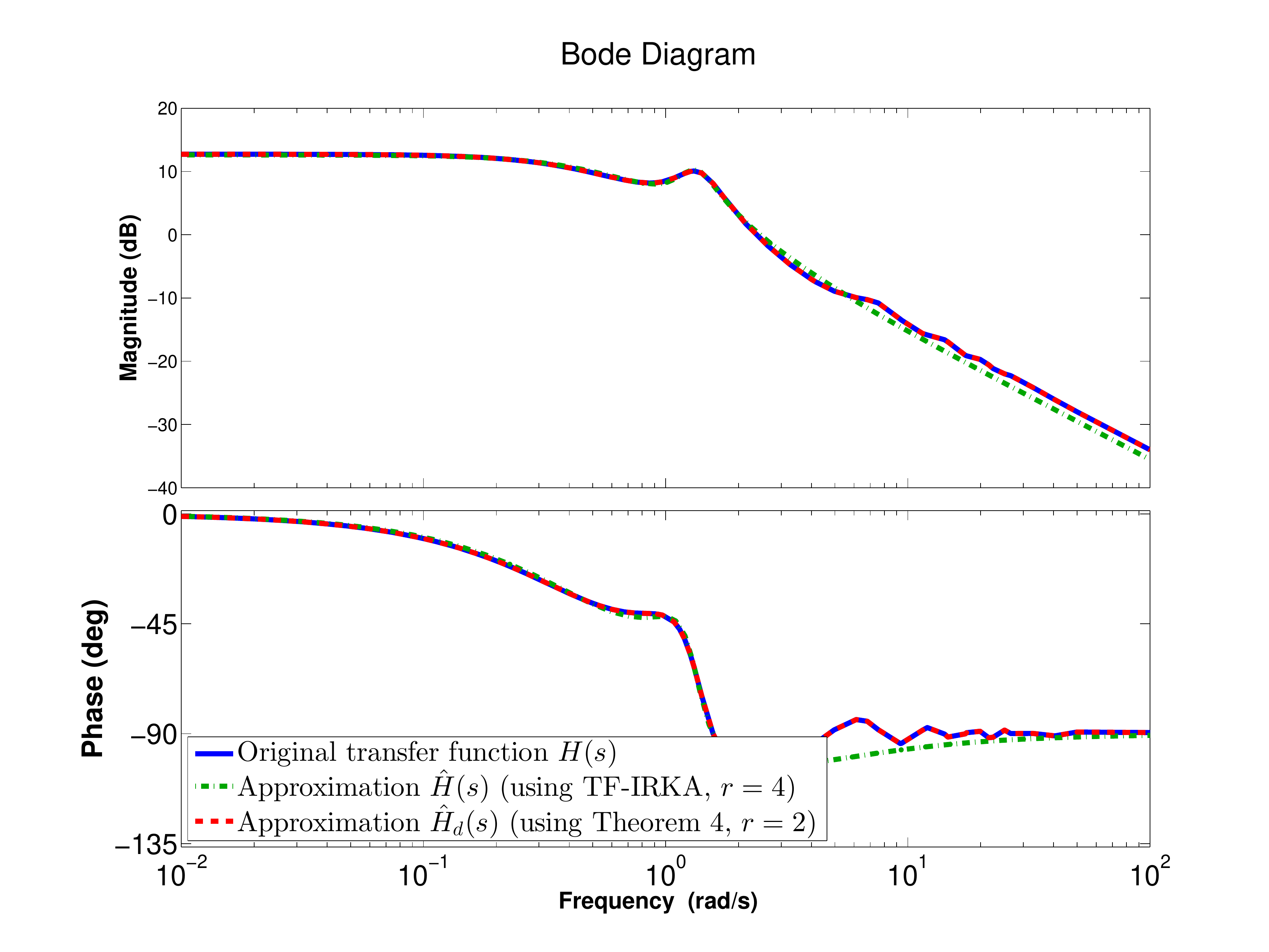}
	 \caption{Bode diagram of original model (blue solid line), model of order $r=4$ approximated with \textbf{TF-IRKA} (green dashed dotted curve) and delay interpolation model using Theorem \ref{thm:DerivdelayLoewner} of order $r=2$ (red dashed line).}   
	 \label{fig:Exemple1b}
\end{figure}

\begin{figure}[here]
  	\centering
	\includegraphics[width=0.75\textwidth]{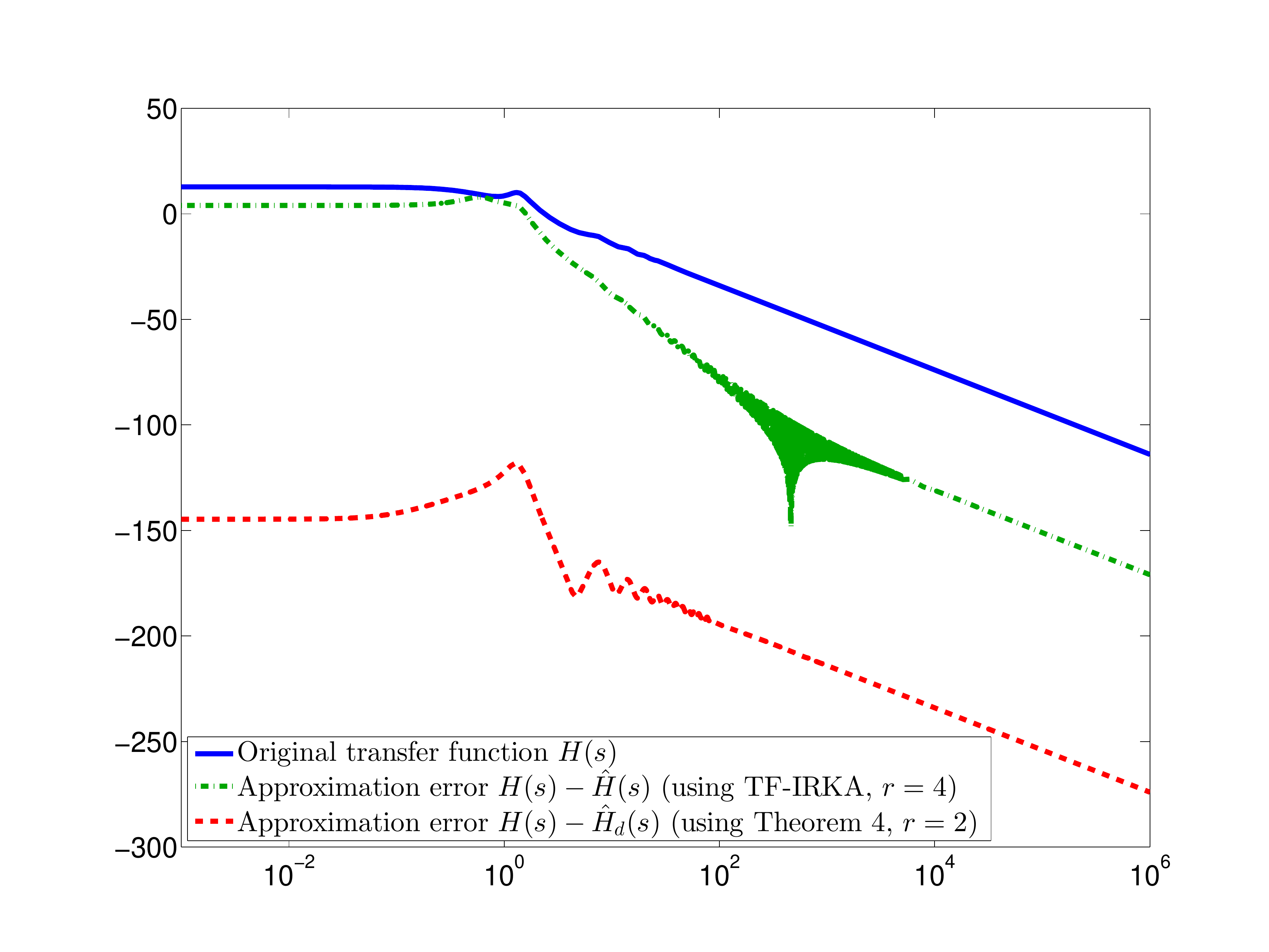}
	 \caption{Singular value frequency response diagram of original model (blue solid line), approximation error with model of order $r=2$ obtained with \textbf{TF-IRKA} (green dashed dotted curve) and approximation error of the delay interpolation model using Theorem \ref{thm:DerivdelayLoewner} of order $r=2$ (red dashed line).}   
	 \label{fig:Exemple1b_error}
\end{figure}

This specific example clearly emphasizes the fact that, if the original model is a delay model, the counter part of obtaining a good delay-free approximation  (\emph{e.g.}, using \textbf{TF-IRKA}) is to increase the approximation order (here the original model of order 2 must be approximated with an order 4 to well recover the frequency behaviour). As illustrated on Figure  \ref{fig:Exemple1b_error}, even with an order $r=4$, the delay-free model cannot perfectly recover the original infinite dimensional model, while the delay model (obtained by Theorem \ref{thm:DerivdelayLoewner}) provides perfect matching (subject to numerical machine precision errors). On the other hand, the proposed delay Loewner framework allows to find an exact realization.

\subsection{Example 2: optimal approximation and method scalability} 

Let us now consider the SISO Los-Angeles Hospital model extracted from the COMP$l_eib$ library \cite{COMPleib2003} whose order is $n=48$, denoted $\mathbf{H}_{build} = C(sI_{48}-A)^{-1}B \in \mathcal{H}_2$. In order to fit the framework proposed in this paper, a delay model is constructed by injecting an internal delay $\tau = 0.01$ to all states, \emph{i.e.}, $\mathbf{H}_{delay} = C(sI_{48}-Ae^{-s\tau})^{-1}B $. This last transfer function is firstly interpolated on the basis of realisation of order $r=10$ by applying the delay Loewner framework from Theorem \ref{thm:DerivdelayLoewner} using $10$ real shift points logarithmically spaced from $0.1$ to $1$. Then, an approximation is obtained using the  \textbf{dTF-IRKA} algorithm proposed in Section \ref{sec:delayTFIRKA}. Figure \ref{fig:Exemple2} compares the Bode plot of these models. 
\begin{figure}[H]
  	\centering
	\includegraphics[width=0.75\textwidth]{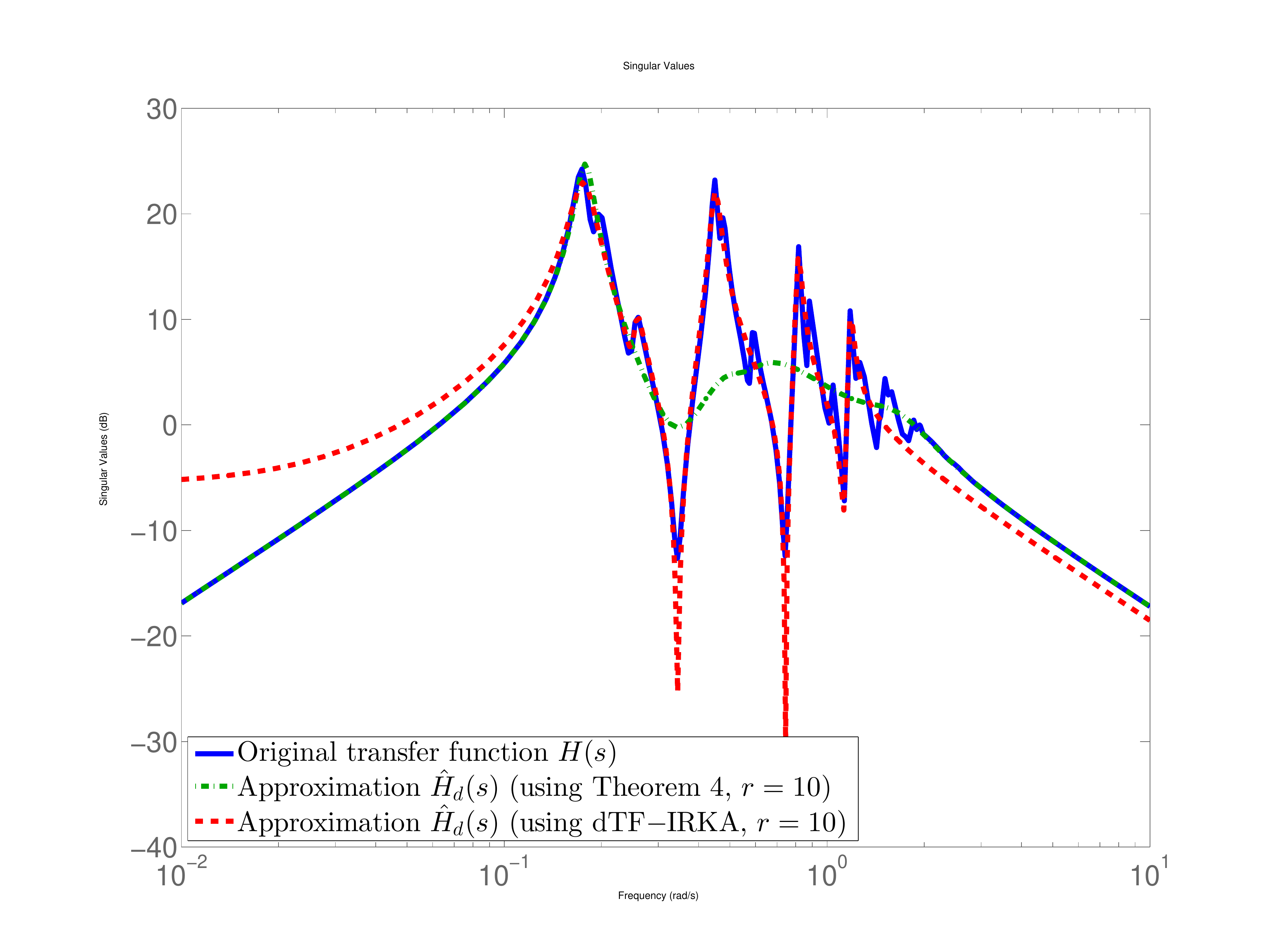}
	 \caption{Bode diagram of original model (blue curve),  delay Loewner interpolation model of order $r=10$ (green dashed curve) and \textbf{dTF-IRKA} of order $r=10$ (red dashed curve).}
	 \label{fig:Exemple2}
\end{figure}

As clearly shown on Figure \ref{fig:Exemple2}, the proposed \textbf{dTF-IRKA} allows to obtain shift points and tangential directions for which the interpolated delay model is much more accurate than the approximation using random shift points. This shows the scalability of the proposed approach for larger models.

\section{Conclusions and perspectives}

In this paper, the problem of interpolating and approximating any dynamical model (provided its transfer function or its evaluation at given points) by a single time-delay finite dimensional one is analysed. Firstly, we present an extended framework which generalizes the Loewner one \cite{mayo2007framework} to the case where the interpolant is a single time-delay model. Then, as a second contribution, the $\mathcal{H}_2$-optimality conditions are derived to solve Problem \ref{pb:General}, leading to an infinite set of conditions. Finally, an algorithm, denoted \textbf{dTF-IRKA}, allowing to obtain a model which satisfies a finite number of optimality conditions is developed and successfully applied to some numerical examples\footnote{Reader should note that the \textbf{dTF-IRKA} will be made available in the MORE toolbox \cite{vuillemin2012}, developed by the Onera research group.}.

One weakness of the proposed method, is the fact that one should know in advance the delay value $\tau$. Future works will investigate this issue by taking into consideration the delay as an optimization variable in the $\mathcal{H}_2$ optimization problem. The extension to the multiple delay case will also be addressed in future works.

\bibliography{igorBiblio}

\end{document}